\let\oldhypertarget\hypertarget
\renewcommand{\hypertarget}[2]{%
	\Hy@raisedlink{\oldhypertarget{#1}{}}
	#2%
	\protected@write\@mainaux{}{%
		\string\expandafter\string\gdef
		\string\csname\string\detokenize{#1}\string\endcsname{#2}%
	}%
}
\newcommand{\mylink}[1]{%
	\hyperlink{#1}{\csname #1\endcsname}%
}
\newcommand{\defproblemu}[3]{
	\vspace{1mm}
	\noindent\fbox{
		\begin{minipage}{0.963\columnwidth}
			#1 \\
			{\bf{Input:}} #2  \\
			{\bf{Question:}} #3
		\end{minipage}
	}
	\vspace{1mm}
}
\newcommand{\algcomment}[1]{\colorbox{black!10}{#1}}
\newcommand{\supp}{ \ensuremath{\mathrm{supp}} }
\newtheorem{theorem}{Theorem}[section]
\newtheorem{definition}[theorem]{Definition}
\newtheorem{lemma}[theorem]{Lemma}
\newtheorem{claim}[theorem]{Claim}
\newtheorem{observation}[theorem]{Observation}
\newcommand{\poly}{\mathsf{poly}}
\newcommand{\cA}{\mathcal{A}}
\newcommand{\cB}{\mathcal{B}}
\newcommand{\cF}{\mathcal{F}}
\newcommand{\cH}{\mathcal{H}}
\newcommand{\cP}{\mathcal{P}}
\newcommand{\cS}{\mathcal{S}}
\newcommand{\bS}{\mathbb{S}}
\renewcommand\vec{\mathbf}
\title{Weighted $k$-\textsc{Path} and Other Problems in Almost $O^*(2^k)$ Deterministic Time via Dynamic Representative Sets}
\author{Jesper Nederlof\thanks{Utrecht University, The Netherlands,  \href{j.nederlof@uu.nl}{j.nederlof@uu.nl}. Supported by the project COALESCE that has received funding from the European Research Council (ERC), grant agreement No 853234.}}
\date{}
\begin{document}
\maketitle
\begin{abstract}
We present a data structure that we call a \emph{Dynamic Representative Set}. In its most basic form, it is given two parameters $0< k < n$ and allows us to maintain a representation of a family $\cF$ of subsets of $\{1,\ldots,n\}$. It supports basic update operations (unioning of two families, element convolution) and a query operation that determines for a set $B \subseteq \{1,\ldots,n\}$ whether there is a set $A \in \cF$ of size at most $k-|B|$ such that $A$ and $B$ are disjoint. After $2^{k+O(\sqrt{k}\log^2k)}n \log n$ preprocessing time, all operations use $2^{k+O(\sqrt{k}\log^2k)}\log n$ time.

Our data structure has many algorithmic consequences that improve over previous works.
One application is a deterministic algorithm for the \textsc{Weighted Directed} $k$-\textsc{Path} problem, one of the central problems in parameterized complexity. Our algorithm takes as input an $n$-vertex directed graph $G=(V,E)$ with edge lengths and an integer $k$, and it outputs the minimum edge length of a path on $k$ vertices in $2^{k+O(\sqrt{k}\log^2k)}(n+m)\log n$ time (in the word RAM model where weights fit into a single word). Modulo the lower order term $2^{O(\sqrt{k}\log^2k)}$, this answers a question that has been repeatedly posed as a major open problem in the field.
\end{abstract}
\section{Introduction}
\label{sec:intro}

A central paradigm within parameterized complexity in the last decade or so has been to speed up dynamic programming algorithms by manipulating its tables with \emph{algebraic transformations} or \emph{sparsification}.
Examples of such algebraic transformation methods include the Group Algebra method \cite{DBLP:journals/talg/KoutisW16}, the Subset Convolution method~\cite{DBLP:conf/stoc/BjorklundHKK07}, and the Cut\&Count method~~\cite{DBLP:journals/talg/CyganNPPRW22}.
Examples of the table sparsification method include most notably, Color Coding~\cite{DBLP:journals/jacm/AlonYZ95}, but also Divide\&Color~\cite{doi:10.1137/080716475}, and most relevant for this paper, \emph{representative sets}~\cite{MONIEN1985239,DBLP:journals/iandc/BodlaenderCKN15,DBLP:journals/jacm/FominLPS16}.

While these methods have been extremely well-studied, have been celebrated particularly within the field of Parameterized Complexity,\footnote{For example, the works~\cite{DBLP:journals/jacm/AlonYZ95,DBLP:journals/talg/KratschW14,DBLP:journals/siamcomp/Bjorklund14,DBLP:journals/talg/CyganNPPRW22} all directly relate to either of the two mentioned techniques and have all been awarded a IPEC Nerode prize in separate years.} and found many applications in algorithm design in general, both types of methods still are not so well understood.
In particular, while the algebraic transformations often lead to the fastest known algorithms, these algorithms typically are randomized and inherently do not extend in an efficient way to weighted variants of the problem at hand. In contrast, table sparsification typically \emph{does} lead to deterministic algorithms that also solve weighted variants of the problem, but these algorithms are a bit slower than the ones based on algebraic transformations. A central question therefore is: Can we bridge this gap and obtain the best of both methods?

In this paper we make progress on this central question. We show how to improve the bottleneck in a wide family of algorithms based on representative sets and obtain deterministic algorithms for weighted variants of problems with run times that match the currently fastest (transformation-based) randomized algorithms.

\subsection{The Weighted $k$-\textsc{Path} Problem.}
One of the most well-studied problems in parameterized complexity is the $k$-\textsc{Path} problem.
In its simplest version, one is given an undirected graph $G$ and an integer $k$ and needs to determine whether $G$ has a simple path on $k$ vertices.
There has been a long ``race'' for the fastest algorithm (as a function of $k$) for several natural variants of this problem. See Table~\ref{tab} for an overview.

While the exact result statements of Table~\ref{tab} are of secondary importance, this study has played a key role in the development of central techniques in parameterized complexity such as the Color Coding and Group Algebra method. This also made major impact beyond parameterized complexity.\footnote{For example, color coding found also use in fine-grained complexity (see e.g.~\cite{DBLP:journals/algorithmica/AlonYZ97,DBLP:conf/soda/Bringmann17}) and the algebraic methods were a precursor for a breakthrough result on the \textsc{Hamiltonicity} problem~\cite{DBLP:journals/siamcomp/Bjorklund14}.}

We now state our first algorithmic result formally. We consider the following variant of $k$-\textsc{Path}:

\defproblemu{\textsc{Weighted Directed $k$-Path}}{A directed $n$-vertex graph $G=(V,E)$ with weights $w_{i,j} \in \mathbb{N}$ for $(i,j)\in E$, an integer $k$.}{The minimum weight $\sum_{i=2}^k w_{v_{i-1},v_i}$ of a simple path $(v_1,\ldots,v_k)$ of $G$.}

\begin{theorem}\label{thm:kpath}
	There is a deterministic algorithm for the \textsc{Weighted Directed $k$-path} problem that runs in $2^{k+O(\sqrt{k}\log^2k)}(n+m)\log n$ time on the word RAM model, where each weight fits into a word.
\end{theorem}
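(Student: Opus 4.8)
The plan is to run the standard representative-family dynamic programme for $k$-\textsc{Path} entirely on top of the (weighted version of the) Dynamic Representative Set data structure, so that every table manipulation becomes a constant number of $\init$, $\union$ and element-convolution calls. For a vertex $v$ and $1\le\ell\le k$, let $\hat\cF_{v,\ell}$ be the family whose members are the vertex sets of simple $\ell$-vertex paths of $G$ ending in $v$, each set $S$ carrying the minimum weight of such a path with vertex set $S$. Then $\hat\cF_{v,1}=\{(\{v\},0)\}$ and, for $\ell\ge 1$,
\[
\hat\cF_{v,\ell+1}\;=\;\bigcup_{u\,:\,(u,v)\in E}\ \bigl\{\,\bigl(S\cup\{v\},\,w+w_{u,v}\bigr)\;:\;(S,w)\in\hat\cF_{u,\ell},\ v\notin S\,\bigr\},
\]
where a set occurring several times in the union keeps the smallest of its weights. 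Correctness of this recurrence is a routine induction on $\ell$: a simple $(\ell{+}1)$-vertex path ending in $v$ is exactly a simple $\ell$-vertex path ending in some in-neighbour $u$ of $v$ and avoiding $v$, followed by the edge $(u,v)$, and minimising over $u$ matches the per-set minimum in the union. The quantity we must output is $\min\{w:(S,w)\in\bigcup_v\hat\cF_{v,k}\}$, which is the weight returned by a single disjointness query with $B=\varnothing$ against $\bigcup_v\hat\cF_{v,k}$, since every set of that family has size exactly $k$.

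To implement this without ever materialising the (exponentially large) families, run the data structure's preprocessing once, in $2^{k+O(\sqrt{k}\log^2 k)}n\log n$ time, and then keep one instance $D_{v,\ell}$ per pair $(v,\ell)$. Each $D_{v,1}$ is created by $\init$ and seeded with the single weighted set $(\{v\},0)$. For $\ell=1,\dots,k-1$ and every $v$, create $D_{v,\ell+1}$ by $\init$ and, for each in-edge $(u,v)$, do the following: take a fresh copy $D'$ of $D_{u,\ell}$ (copying its stored representation costs no more than one data-structure operation); apply to $D'$ the element convolution that inserts $v$ into every set and adds $w_{u,v}$ to every weight, discarding sets that already contain $v$; and $\union$ $D'$ into $D_{v,\ell+1}$. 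Because each operation of the data structure faithfully mirrors the corresponding $\union$ / element-convolution on the underlying families while only ever storing a representative subfamily, the invariant ``$D_{v,\ell}$ answers every disjointness query together with its minimum weight exactly as $\hat\cF_{v,\ell}$ would'' is preserved through all $k$ levels --- note that discarding the sets containing $v$ inside an element convolution is itself just such a query, which is why we need representativeness against all admissible $B$, not only $B=\varnothing$. Finally $\union$ all $D_{v,k}$ into one structure and query it with $B=\varnothing$; report the weight it returns as $\opt$, or that no $k$-path exists if the family is empty.

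For the running time: the preprocessing is paid once; there are $O(kn)$ calls to $\init$ over the run; for each of the $k$ levels and each of the $m$ edges we perform $O(1)$ operations plus one copy; and at the end $O(n)$ further $\union$s and one query. Every such step, copy included, costs $2^{k+O(\sqrt{k}\log^2 k)}\log n$, so the total is $k(n+m)\cdot 2^{k+O(\sqrt{k}\log^2 k)}\log n=2^{k+O(\sqrt{k}\log^2 k)}(n+m)\log n$, the factor $k$ being absorbed into $2^{O(\sqrt{k}\log^2 k)}$; all weight arithmetic is $O(1)$ on the word RAM, and the algorithm is deterministic because the data structure is.

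The only genuinely new ingredient this argument rests on is the Dynamic Representative Set data structure itself, and in particular its weighted generalisation meeting the stated preprocessing and operation bounds; establishing that (elsewhere in the paper) is where the real difficulty lies. Given it, the reduction above is nothing more than the classical path-via-representative-sets dynamic programme, and the only points needing care are the bookkeeping just sketched --- the count of operations, and the fact that copying an instance stays within the per-operation time budget.
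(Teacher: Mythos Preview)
Your proposal is correct and follows essentially the same route as the paper: both run the standard path DP $\vec{a}_{t,p}=\sum_{u\in N^-(t)}w_{u,t}\cdot\vec{a}_{u,p-1}\cdot\vec{C}_{n,t}$ on top of the Dynamic Representative Set of Theorem~\ref{thm:main-sr}, maintain by induction on the path length that the stored vector represents $\vec{a}_{t,p}$, and read off the answer with a single query against $B=\varnothing$. The paper phrases this via the vectors $\vec{b}_{t,p}$ and the formal ``represents'' relation of Definition~\ref{def:rep} (together with Lemma~\ref{lem:repcon} and Observations~\ref{obs:lin}--\ref{obs:init}), whereas you describe it in data-structure language, but the algorithm and the time accounting are the same.
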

The existence of a $O^*(2^k)$ time\footnote{The $O^*()$ notation omits terms polynomial in the input size (where integers are represented in binary).} deterministic algorithm has been \emph{"repeatedly posed as a major open problem in the field"} (quoting~\cite[Page 2]{DBLP:journals/talg/LokshtanovBSZ21}), see e.g.~\cite{DBLP:conf/esa/Zehavi15,DBLP:journals/cacm/KoutisW16, DBLP:journals/jacm/FominLPS16}.
Simultaneously, it was also asked in previous works whether there is an $O^*(2^{k})$ time algorithm that can handle weighted extensions of $k$-\textsc{Path}.
Modulo the lower order $2^{O(\sqrt{k}\log^2k)}$ term, Theorem~\ref{thm:kpath} resolves both questions.

Even for $k=n$, the only known exponential speed-up over $O^*(2^k)$ is an $O^*(1.67^n)$ time algorithm for Undirected Hamiltonicity~\cite{DBLP:journals/siamcomp/Bjorklund14}. Hence it may be hard to improve our algorithm exponentially.

\newcommand{\chec}{\checkmark}
\newcommand{\cross}{$\times$}
\begin{table}[]
	\centering
	\begin{tabular}{|l|l|l|l|l|}
		\hline
		\hspace{0.8em}\parbox[c]{4em}{Run time \vspace{1.7em}} &\rotatebox{90}{\hspace{-0.5em}Determ.}&\rotatebox{90}{\hspace{-0.5em}Weights\ }&\rotatebox{90}{\hspace{-0.5em}Directed}& \hspace{0.1em}\parbox[c]{4em}{Reference\vspace{1.7em}}\\
		\hline
		\hline
		$k!$ & \chec &\chec&\chec&\cite{MONIEN1985239}    \\
		$k!2^k$& \chec &\cross&\cross&\cite{DBLP:conf/wads/Bodlaender89}   \\
		$(2e)^k$& \cross  &\chec&\chec&\cite{DBLP:journals/jacm/AlonYZ95}   \\
		$c^k, c > 8000$ & \chec &\chec&\chec&\cite{DBLP:journals/jacm/AlonYZ95}   \\
		$16^k$ & \chec &\chec&\chec&\cite{DBLP:conf/wg/KneisMRR06}   \\
		$4^k$ & \cross &\chec&\chec&\cite{doi:10.1137/080716475}  \\
		$4^{k+o(k)}$ &\chec&\chec &\chec&\cite{doi:10.1137/080716475}   \\
		$2.83^{k}$ &\cross &\cross&\chec&\cite{DBLP:conf/icalp/Koutis08}  \\
		\hline
	\end{tabular}	
	\ \ 
	\begin{tabular}{|l|l|l|l|l|}
		\hline
		\hspace{0.8em}\parbox[c]{4em}{Run time \vspace{1.7em}} &\rotatebox{90}{\hspace{-0.5em}Determ.}&\rotatebox{90}{\hspace{-0.5em}Weights\ }&\rotatebox{90}{\hspace{-0.5em}Directed}& \hspace{1.4em}\parbox[c]{4em}{Reference\vspace{1.7em}}\\
		\hline
		\hline
		$2^{k}$ & \cross &\cross&\chec&\cite{DBLP:journals/ipl/Williams09},\cite{DBLP:journals/talg/KoutisW16}\\
		$1.66^{k}$ & \cross &\cross&\cross&\cite{DBLP:journals/jcss/BjorklundHKK17}\\		
		$2.851^{k}$ &\chec &\chec&\chec&\cite{DBLP:conf/soda/FominLS14}\\
		$2.62^{k}$ &\chec&\chec &\chec&\cite{DBLP:journals/jacm/FominLPS16},\cite{DBLP:journals/jcss/ShachnaiZ16}\\
		$2.60^{k}$ &\chec &\chec&\chec&\cite{DBLP:conf/esa/Zehavi15} \\		
		$2.56^{k}$ &\chec &\chec&\chec&\cite{DBLP:journals/tcs/Tsur19b}\\
		\hline
		\hline
		\multirow{2}{*}{$2^{k+o(k)}$} &\multirow{2}{*}{\chec}&\multirow{2}{*}{\chec}&\multirow{2}{*}{\chec}&\multirow{2}{*}{\textbf{this work}}\\[2.4ex]
		\hline
	\end{tabular}
	
	\centering
	\caption{The run time of algorithms for $k$-\textsc{Path}, omitting factors polynomial in the input size (where weights are represented in binary). Columns "Determ.", "Weights", "Directed" indicate whether the algorithm is, respectively, deterministic, works for the weighted version (with the stated super-polynomial part of the run time), and works for directed graphs.}
	\label{tab}
\end{table}
\subsection{Representative Sets.}\label{sec:repsets}
Representative sets play a main role in this paper and hence we introduce them in detail now.

Conceptually, the dynamic programming paradigm partitions at any stage a solution into two sets of \emph{partial
	solutions}: A set $\cA$ of partial solutions represented in the current table entries, and a set $\cB$ of (yet unexplored) partial solutions that could potentially complete a partial solution in $\cA$ to a global solution. The definition of $\cA$ and $\cB$ does not depend on the actual input, but only on its parameters (like number of vertices, solution size, or treewidth).

A crucial object of study in the context of representative sets is the \emph{compatibility matrix}, which is a Boolean matrix $\mathbf{M} \in \{0,1\}^{\cA \times \cB}$ indicating whether two (fingerprints of) partial solutions combine into a solution.
Suppose that $\bS$ is a semiring and a dynamic programming table is a vector $\vec{a}\in \bS^\cA$. That is, the table $\vec{a}$ assigns to each fingerprint/partial solution $A \in \cA$ a number $\vec{a}[A]$ that indicates whether there is a partial solution with fingerprint $A$ (if $\bS$ is the Boolean semiring with $\vee$ and $\wedge$ as operations) for decision problems, the number of partial solutions with fingerprint $A$ (if $\bS$ is the ring of integers with integer addition and multiplication) for counting problems, or the minimum weight of a partial solution with fingerprint $A$ (if $\bS$ is the min-sum semiring with $\min$ as addition and integer addition as multiplication) for minimization problems.  

The general idea of representative sets is the following observation: If $\vec{a},\vec{a'} \in \bS^\cA$ are two different dynamic programming tables such that $\vec{a}\cdot\vec{M}=\vec{a'}\cdot\vec{M}$, we call $\vec{a'}$ a \emph{representative set} of $\vec{a}$. Since the solution of the problem at hand equals $\vec{a}\cdot\vec{M}\cdot\vec{b}^\intercal =\vec{a'}\cdot\vec{M}\cdot\vec{b}^\intercal$, for some (unknown) vector $\vec{b}$, we can as well replace the dynamic programming table $\vec{a}$ with its representative set $\vec{a'}$ and retain the solution. The use of this observation is due to that for many matrices $\mathbf{M}$ one can show that any $\vec{a}$ has a representative set $\vec{a'}$ with small support (i.e., for minimization problems, only few values of $\vec{a'}$ are finite). For example, if $\bS$ is the min-plus semiring, a vector $\vec{a}$ with support $\supp(\vec{a})$ is given and $\mathbf{M}$ has rank $r$ over a finite field, one can find a representative set $\vec{a'}$ with $|\supp(\vec{a'})| \leq r$ in time $|\supp(\vec{a})|\cdot r^{\omega-1}$, where $\omega$ is the matrix multiplication constant~\cite{DBLP:journals/iandc/BodlaenderCKN15}.
When the compatibility matrix $\vec{M}$ commutes with the matrices that compute new table entries from old table entries, one can use this observation to compute the outcome of the dynamic programming while only working with representative sets of small support and have faster algorithms.

As mentioned before, unfortunately this approach often does not lead to the fastest known algorithms. The computation of small representative sets is the only bottleneck here, and it is an important and beautiful open question to remove this bottleneck:
\newcounter{oq}
\newenvironment{openq}[1][]{\refstepcounter{oq}\par\medskip\mdfsetup{%
		nobreak=true,
		middlelinecolor=gray,
		middlelinewidth=1pt,
		backgroundcolor=gray!10,
		innertopmargin=5pt,
		roundcorner=5pt}
	\begin{mdframed}\textbf{Question~\theoq:}}{\end{mdframed}\medskip}
\begin{openq}\label{oqa}
	For which $\mathbf{M}$ are there deterministic $\tilde{O}(|\supp(\vec{a})|+r)$ time algorithms to compute a representative set of support size at most $r$ of $\vec{a}$?
\end{openq}
See also~\cite{DBLP:conf/birthday/Nederlof20}.
We stress that this is an important question. For example:
If Question~\ref{oqa} is positively resolved with the compatibility matrix $\mathbf{M}$ indicating whether the union of

\begin{itemize}
	\item two forests of a graph is a spanning tree, then the Cut\&Count method~\cite{DBLP:journals/talg/CyganNPPRW22} can be derandomized and extended to weighted problems without overhead in the runtime,
	\item two independent sets of a matroid form another independent set, then many kernelization algorithms based on~\cite{DBLP:journals/jacm/KratschW20} can be derandomized,
	\item two perfect matchings forms a Hamiltonian cycle, then the approach from~\cite{DBLP:conf/stoc/Nederlof20} is a promising direction towards a $O^*(1.9999^n)$ time algorithm for the \textsc{Traveling Salesperson Problem}.
\end{itemize} 

\subsection{Representative sets for $k$-\textsc{Path}.}\label{subsec:kpath}
All known algorithms for \textsc{Weighted Directed $k$-Path} that run in $O^*(c^k)$ time, for some $c<3$, rely on the representative set method mentioned above.
We will now make this application to \textsc{Weighted Directed $k$-Path} more explicit. A path can, at any vertex $v$ contained in it, be naturally be broken down into two disjoint subpaths ending/starting at $v$, and the concatenation of any path ending at $v$ and starting at $v$ gives another path assuming the paths are disjoint. Hence the compatibility matrix $\vec{M}=\vec{D}_{n,k}$ (defined below) will indicate disjointness.
While this application of representative sets goes back all the way to~\cite{MONIEN1985239}, we present it in a new linear algebraic manner that serves as a gentle introduction to later parts of this paper.

Fix the semiring $\mathbb{S}=(S, +, \cdot,\bar{0},\bar{1})$ to be the min-sum semiring $(\mathbb{N} \cup \infty,\min,+,\infty,0)$ (we will stick to the general notation for convenience).
Let $G=([n],E)$ be a graph and let $w_{i,j} \in S$ be the weight of the edge from vertex $i$ to vertex $j$.
Let $e \in [n]$ and let $k$ be a fixed integer. We define the following ``Element Convolution'' and ``Disjointness'' matrices that are indexed by subsets of $U$:
\[
	\vec{C}_{n,e}[A,B] := \llbracket e \notin A \text{ and } A \cup \{e\} = B \rrbracket \qquad \text{and}\qquad
\vec{D}_{n,k}[A,B] :=	\llbracket A \cap B = \emptyset \text{ and } |A \cup B|\leq k \rrbracket.
\]
Here the Iverson bracket notation $\llbracket b \rrbracket$, for a boolean $b$, indicates $\bar{0}$ if $b$ is false, and $\bar{1}$ if $b$ is true.
A crucial property which we show in Lemma~\ref{lem:comm} is that these matrices commute in the sense that $\vec{C}_{n,e} \cdot  \vec{D} = \vec{D}\cdot  \vec{C}^\intercal_{n,e}$.
Suppose $w_{i,j} \in \mathbb{S}$ for each $i,j \in [j]$ (where we set $w_{i,j}:=\bar{0}=\infty$ if $(i,j)\notin E$), that $t \in [n]$ is a vertex and we want to evaluate the following ``$k$-path polynomial'' (which equals the minimum weight of a simple path on $k$ vertices by the choice of $\bS$):
\begin{equation}\label{eq:kppoly}
	\sum_{\substack{v_1,\ldots,v_k \in [n] \\  \forall i\neq j: v_i \neq v_j}} \prod_{i=1}^{k-1} w_{v_i,v_{i+1}}.
\end{equation}
For each $t \in [n]$ and $p \leq k$ define a vector $\vec{a} \in \bS^{2^{[n]}}$ as follows:
\[
\vec{a}_{t,p} = \sum_{\substack{v_1,\ldots,v_p \in [n] \\ v_p=t }} \left( \prod_{i=1}^{p-1} w_{v_i,v_{i+1}}\cdot\vec{C}_{n,v_i}\right)\vec{C}_{n,t}.
\]
So $\vec{a}_{t,k}[A]$ equals the minimum total weight of a simple path on $k$ vertices that ends at $t$ and visits all vertices in $A$ exactly once (and no other vertices).
Note that~\eqref{eq:kppoly} is equal to $(\vec{a}_{t,k}\cdot\vec{D}_{n,k})[\emptyset]=\sum_{A \subseteq \binom{[n]}{\leq k}}\vec{a}[A]$, and (as we show in Section~\ref{subsec:kpathformal}) we have the following recurrence for $p >0$:
\begin{align}
	\vec{a}_{t,p} 
	&= \sum_{u\in [n]} w_{u,t}\cdot \vec{a}_{u,p-1} \cdot \vec{C}_{n,t} \nonumber \\
	\intertext{Multiplying both sides with $\vec{D}_{n,k}$ and using $\vec{C}_{n,e} \cdot  \vec{D} = \vec{D}\cdot  \vec{C}^\intercal_{n,e}$, we can express $\vec{a}_{t,p}\cdot\vec{D}_{n,k}$ in terms of $\vec{a}_{u,p-1}\cdot\vec{D}_{n,k}$:}
	\vec{a}_{t,p}\cdot \vec{D}_{n,k} 
	&= \sum_{u\in [n]} w_{u,t}\cdot \vec{a}_{u,p-1}\cdot \vec{C}_{n,t}\cdot \vec{D}_{n,k}.\label{eq:simpath1}\\
	&= \sum_{u\in [n]} w_{u,t}\cdot  \vec{a}_{u,p-1}\cdot \vec{D}_{n,k} \cdot \vec{C}^\intercal_{n,t}.\label{eq:simpath2}
\end{align}
Since low rank allows for efficient handling of $\vec{a}_{u,p-1}\cdot\vec{D}_{n,k}$, this brings us to studying the rank of $\vec{D}_{n,k}$.
While the rank of $\mathbf{D}_{n,k}$ is large (see e.g.~\cite[Theorem 13.10]{DBLP:series/txtcs/Jukna11} for a proof over the field $\mathbb{F}_2$), the \emph{Boolean rank} of $\vec{D}_{n,k}$ is at most roughly $2^k$.\footnote{For an example, the factorization of $\vec{D}_{u,s}$ into $\vec{X},\vec{Y}$ in Section~\ref{sec:imp} is such a factorization of rank close to $2^k$.} This implies that, over any additively idempotent semiring, we have a rank $r$ factorization $\mathbf{D}_{n,k}=\vec{L}\cdot\vec{R}$ where $r$ is (close to) $2^k$.

Previous works for \textsc{Weighted Directed $k$-Path} use recurrences similar to~\eqref{eq:simpath1} and~\eqref{eq:simpath2}, but a bottleneck in these works is the computation of (an implicit representation of) $\vec{a}\cdot\vec{C}_{n,e}\cdot\vec{D}_{n,k}$ from (an implicit representation of ) $\vec{a}\cdot\vec{C}_{n,e}\cdot\vec{D}_{n,k}$, where $\vec{a}$ is the actual dynamic programming table. 
Roughly speaking, the previously fastest algorithms for this tasks assume $\vec{a}$ has small support and is explicitly given, and compute a representative set\footnote{See~\cite[Section 12.4]{DBLP:books/sp/CyganFKLMPPS15} for an introduction and~\cite{DBLP:journals/tcs/Tsur19b} for the currently fastest one. Note that in the literature these are often called \emph{`representative families for uniform matroids'}. We avoid the matroid terminology as it rules out many matrices $\mathbf{M}$, uses unnecessary jargon and yet matroid theory does not shed additional light on disjointness matrices.} of $\vec{a}\cdot\vec{C}_{n,e}$ of small support by using the mentioned Boolean rank $r$ factorization based on the following observation:\footnote{If $\bS$ is the min-sum semiring, one can define for each $j \in [r]$ an entry $\vec{a'}[i]:=\vec{a}[i]$ where $i$ is such that $\vec{L}[i,j]=1$ and $\vec{a}[i]$. All other entries of $\vec{a'}$ are defined to be $\infty$ and hence $|\supp(\vec{a'})|\leq r$.} If $\vec{a},\vec{a'} \in \bS^{2^{[n]}}$ are such that $\vec{a'}\cdot\mathbf{L}=\vec{a}\cdot\vec{L}$, then $\vec{a'}\cdot\vec{D}_{n,k}=\vec{a}\cdot\vec{D}_{n,k}$ and hence $\vec{a'}$ is a representative set for $\vec{a}$. 

An important ingredient in such algorithms that compute representative sets that we will also use is what we call the \emph{splitting trick}, which uses splitters (see Definition~\ref{def:spl} and Lemma's~\ref{lem:per} and~\ref{lem:split}) to hash and partition the universe $[n]$ into $s=\sqrt{k}$ blocks in such a way that the sought path $P$ has $s$ vertices in each block. A random hash function and partitioning of the universe has this property with large enough probability, and splitters can be seen as a derandomization of this argument. This allows us to work with the Kronecker power $\vec{D}_{u,s}^{\otimes s}$ as compatibility matrix instead of $\vec{D}_{n,k}$, where $u:=k^2$.

\subsection{Intuition behind our proof for Theorem~\ref{thm:kpath}.}\label{subsec:intuition}
Our main innovation allowing us to prove Theorem~\ref{thm:kpath} is to compute an (implicit) representation of $\vec{a}\cdot\vec{C}_{n,e}\cdot\vec{D}_{n,k}$ from a given a (similarly implicit) representation of $\vec{a}\cdot\vec{C}_{n,e}\cdot\vec{D}_{n,k}$. See Theorem~\ref{thm:main-sr} for precise statements. From this, Theorem~\ref{thm:kpath} follows in a relatively standard way and hence we only focus on the mentioned key step. By using the splitting trick as mentioned in the previous paragraph, it can be shown that it suffices to quickly compute an (implicit) representation of $\vec{a}\cdot(\vec{I}^{\otimes i-1}\otimes \vec{C}_{u,e}\otimes \vec{I}^{\otimes s - i})\cdot\vec{D}^{\otimes s}_{u,s}$ from a (similarly implicitly) given representation of $\vec{a}\cdot\vec{D}^{\otimes s}_{u,s}$ (i.e., when we want to insert a vertex of the path that is in the $i$-th block). Here $\vec{I}$ denotes the identity matrix of appropriate dimension.

Our starting point is a simple linear algebraic idea that works under the following (invalid!) assumption: Suppose that $\vec{D}_{u,s}$ has a rank-$\ell$ factorization $\vec{D}_{u,s}=\vec{X}\cdot\vec{Y}$ over the reals for some $\ell \leq 2^s$. Then we can assume $\vec{Y}$ has rank $\ell$ and hence there exists a right inverse $\vec{Y}^{-1}_{\mathsf{right}}$ such that $\vec{Y}\cdot\vec{Y}^{-1}_{\mathsf{right}}=\vec{I}$. Suppose that we have a representation $\vec{b}=\vec{a}\cdot\vec{X}^{\otimes s}$ of $\vec{a}\cdot\vec{D}^{\otimes s}_{u,s}$, then given $\vec{b}$ we can compute a representation of $\vec{a}\cdot(\vec{I}^{\otimes i-1}\otimes \vec{C}_{u,e}\otimes \vec{I}^{\otimes s - i})$ of the same type as follows:
\[
\begin{aligned}
	\vec{b}\cdot(\vec{I}^{\otimes i-1}\otimes \vec{Y}\cdot\vec{C}^\intercal_{u,e}\cdot\vec{Y}^{-1}_{\mathsf{right}}\otimes \vec{I}^{\otimes s-i}) &= \vec{a}\cdot (\vec{X}^{\otimes i-1} \otimes \vec{D}_{u,s}\cdot\vec{C}^\intercal_{u,e}\cdot\vec{Y}^{-1}_{\mathsf{right}}\otimes\vec{X}^{\otimes s-i})\\
	&=\vec{a}\cdot (\vec{X}^{\otimes i-1} \otimes \vec{C}_{u,e}\cdot\vec{D}_{u,s}\cdot\vec{Y}^{-1}_{\mathsf{right}}\otimes\vec{X}^{\otimes s-i})\\
	&=\vec{a}\cdot(\vec{I}^{\otimes i-1}\otimes \vec{C}_{u,e}\otimes \vec{I}^{\otimes s - i-1})\cdot\vec{X}^{\otimes s}.
\end{aligned}
\]
Since the vector $\vec{b}\cdot(\vec{I}^{\otimes i-1}\otimes \vec{Y}\cdot\vec{C}^\intercal_{u,e}\cdot\vec{Y}^{-1}_{\mathsf{right}}\otimes \vec{I}^{\otimes s-i})$ can be computed from the vector $\vec{b}$ in a direct way in $2^{s(s+1)}=2^{k+\sqrt{k}}$ time this gives a simple way to prove Theorem~\ref{thm:main-sr} (but, under an invalid assumption).

To make this idea work without the invalid assumption that $\vec{D}_{u,s}$ has small rank over the reals, we use the fact that $\vec{D}_{u,s}$ has small Boolean rank in combination with another type of ``inversion'' (Lemma~\ref{lem:sisr}) which heavily relies on a partial order induced by $\bS$ and is applied to $\vec{X}$ instead of $\vec{Y}$. In the above setting, our actual approach can be thought of as follows:
\begin{enumerate}
	\item Compute the \emph{unique minimal} vector $\vec{a}^*_{\mathsf{ext}}$ such that $\vec{b} \preceq_\bS \vec{a}^*_{\mathsf{ext}}(\mathbf{I}^{\otimes i-1}\otimes\vec{X}\otimes\vec{I}^{\otimes s-i})$.
	\item Output $\vec{a}^*_{\mathsf{ext}}\cdot(\vec{I}^{\otimes i-1}\otimes \vec{C}_{u,e}\cdot\vec{X}\otimes \vec{I}^{\otimes s - i})$
\end{enumerate}

Since we do not have additive inverses in the min-plus semiring, the inversion from Lemma~\ref{lem:sisr} is lossy and could lead to inconsistent data: If we first insert an element in the first block and apply ``inversion'' here to get a vector $\vec{b'}$ we cannot guarantee that $\vec{b}=\vec{a}^*_{\mathsf{ext}}(\vec{I}\otimes\vec{X}\otimes\vec{I}^{\otimes s-3})$ for some $\vec{a}^*_{\mathsf{ext}}$ and hence inversion may fail when we try to insert an element in the second block.
Also, the matrices $\vec{X}$ and $\vec{C}_{u,e}$ do not commute and hence it seems hard to recover e.g. $\vec{a}\cdot(\vec{C}_{u,e}\otimes \vec{I})$ from $\vec{a}\cdot\vec{X}^{\otimes 2}$ by only consider the ``slices'' induced by the second block separately.

However, the inversion is only lossy in one direction since in Step 1. there is always a unique \emph{minimal} solution, implying that the actual original solution is above $\vec{a}^*_{\mathsf{ext}}$.
Moreover, $\vec{C}_{u,e}$ and $\vec{D}_{u,s}$ \emph{do} commute. This allows us to define an upper bound (based on the minimality property of inversion) and a lower bound (that actually compares images of $\vec{D}_{u,s}$, which \emph{does} commute with $\vec{C}_{u,e}$) invariant that can be simultaneously maintained. See Definition~\ref{def:rep} for these invariant statements.

\subsection{The algorithmic tool of ``Dynamic Representative Sets'' and its applications beyond \textsc{$k$-Path}}
While it is convenient and tempting to restrict attention to the \textsc{Weighted Directed $k$-Path} problem,our main innovation as discussed above is quite general and hence it is worthwhile to relate it to Question~\ref{oqa}. Indeed, it still does not answer Question~\ref{oqa} for the special case of $\mathbf{M}$ being disjointness matrices $\mathbf{D}_{n,k}$ (for which representative sets of support size $2^k$ exist). However, the above method allows us to obtain the arguably "next best" thing: A data structure that supports all relevant operations and maintains some sort of representative set with small overhead in runtime.
We formalize this (in Theorem~\ref{thm:main-sr}) in what we call a \emph{dynamic representative set}.
In its most basic form (i.e., $\bS$ being the Boolean semiring $(\{0,1\},\vee,\wedge,0,1)$), it is given two parameters $0< k < n$ and allows us to maintain a representation of a family $\cF$ of subsets of $\{1,\ldots,n\}$. It supports basic update operations (unioning of two families, element convolution) and a query operation that determines for a set $B \subseteq \{1,\ldots,n\}$ whether there is a set $A \in \cF$ of size at most $k-|B|$ such that $A$ and $B$ are disjoint. After $2^{k+O(\sqrt{k}\log^2k)}n \log n$ preprocessing time, all operations use $2^{k+O(\sqrt{k}\log^2k)}\log n$ time.

We believe this formalism is easy to use as black box for speeding up relevant dynamic programming algorithms of which the compatibility matrices are disjointness matrices.
We leave it as a challenging open question whether such dynamic representative sets can also be maintained for other matrices $\mathbf{M}$ to improve e.g. the three applications mentioned below Question~\ref{oqa}.

To relate this more directly to the existing literature, we show how to use dynamic representative sets to get a fast deterministic algorithm for the following problem (see Subsection~\ref{subsec:mul} for definitions):

\defproblemu{\textsc{Skewed Multilinear Monomial Summation over $\bS$}}{A $d$-skewed arithmetic circuit $C$ over an additively idempotent semiring $\bS$ in commutative indeterminates $x_1,\ldots,x_n$ that computes a polynomial $C(x_1,\ldots,x_n)=\sum_{A \subseteq [n]}c_A$, and an integer $k$.}{The value $\sum_{A \in \binom{[n]}{k}} c_A$.}

\begin{theorem}\label{thm:mul}
	There is a deterministic algorithm for \textnormal{\textsc{Skewed Multilinear Monomial Summation Over $\bS$}} that uses $2^{k+O(\sqrt{k}\log^2k)}(n+d\cdot |C|)\log n$ time and arithmetic (i.e., additions, multiplications and $\mathsf{lcu}$ operations) operations in $\bS$.
\end{theorem}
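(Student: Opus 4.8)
The plan is to evaluate the circuit $C$ gate by gate in topological order, never storing the (possibly exponentially large) polynomial $C_g$ that a gate $g$ computes, but instead a small bundle of dynamic representative sets from Theorem~\ref{thm:main-sr}. Concretely, for every gate $g$ and every degree $0\le p\le k$ I would maintain a data structure $\cD_g^{(p)}$ — all instances built over $\bS$ with the same parameters $k,n$ and sharing a single round of preprocessing — that represents the vector $\vec a_g^{(p)}\in\bS^{2^{[n]}}$ assigning to each $A$ with $|A|=p$ the $\bS$-coefficient of the multilinear monomial $\prod_{i\in A}x_i$ in $C_g$. Tracking the degree explicitly lets me (i) discard every non-multilinear monomial and every monomial of degree above $k$ — legitimate because circuit multiplication only enlarges supports and never turns a non-multilinear monomial into a multilinear one, so such terms can never reach the size-$k$ multilinear part of the output — and (ii) read off the final answer as the value returned by the disjointness query on $\cD_{\mathsf{root}}^{(k)}$ with $B=\emptyset$, which by the choice of the index equals $\sum_{A\in\binom{[n]}{k}}c_A$ (the ``$\le k$'' inside the query is harmless since the represented family already lies inside $\binom{[n]}{k}$). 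That the maintained object is only a \emph{representative} of $\vec a_{\mathsf{root}}^{(k)}$, rather than that vector itself, does not change the query answer: this is precisely the guarantee of Definition~\ref{def:rep} / Theorem~\ref{thm:main-sr}, which I would use purely as a black box.

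Next I would realize the three gate types via the three supported operations. A constant gate $c$ and an input gate $x_i$ are handled by $\init$-ing the singleton families $\{\emptyset\}$ with $\bS$-weight $c$ at degree $0$, and $\{\{i\}\}$ at degree $1$, respectively. An addition gate $g=g_1+g_2$ is handled, for each $p$, by $\init$-ing an empty family and then performing $\union$ with $\cD_{g_1}^{(p)}$ followed by $\union$ with $\cD_{g_2}^{(p)}$; doing the $\union$s into a fresh instance simultaneously produces the working copies one needs because a gate's output may fan out to several gates. The crucial case is a multiplication gate $g=g_1\cdot g_2$: by the definition of a $d$-skewed circuit one of its inputs — say $g_2$ — is a product of a semiring constant $c_m$ with at most $d$ variables $x_{e_1},\dots,x_{e_\delta}$, $\delta\le d$, so for each $p\ge\delta$ I obtain $\cD_g^{(p)}$ by copying $\cD_{g_1}^{(p-\delta)}$ (again via $\init$ then $\union$), folding in the scalar $c_m$, and applying the $\delta$ element convolutions $\vec C_{n,e_1},\dots,\vec C_{n,e_\delta}$. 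Each convolution automatically annihilates any monomial already using the variable being inserted, so multilinearity is preserved for free (even when the ``small'' input happens to repeat a variable, in which case the product is genuinely zeroed out, as it should be), and every intermediate support stays in $[0,k]$.

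Finally I would total the cost. The one shared preprocessing costs $2^{k+O(\sqrt k\log^2 k)}n\log n$. Each input or constant gate uses $O(1)$ $\init$s; each addition gate uses $O(k)$ $\init$ and $\union$ operations; each multiplication gate uses $O(k)$ $\init$/$\union$ operations and $O(kd)$ element convolutions. Hence the whole evaluation performs $O(|C|\cdot kd)$ data-structure operations, each costing $2^{k+O(\sqrt k\log^2 k)}\log n$, plus $O(|C|)$ further $\init$s and the preprocessing term; since the factor $k$ is $2^{O(\sqrt k\log^2 k)}$ and is absorbed into the exponent's lower-order term, the total is $2^{k+O(\sqrt k\log^2 k)}(n+d\cdot|C|)\log n$ time and the same number of $\bS$-operations (additions, multiplications, $\mathsf{lcu}$), which is the claimed bound.

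I expect the main obstacle to be essentially bookkeeping rather than a hard new idea: matching gate types to the data-structure primitives, carrying the degree index so that the size-$k$ part can be isolated, and checking that the working copies needed for fan-out are produced within budget (which is why I route every operation through a freshly $\init$-ed instance). The one genuinely load-bearing structural point is that a $d$-skewed multiplication gate lets its ``small'' input be unpacked into at most $d$ single-variable element convolutions with a leading scalar — this fails for an arbitrary low-degree polynomial (too many monomials), and is exactly what the $d$-skewedness hypothesis supplies. All the substance about why a representative set suffices is inherited from Theorem~\ref{thm:main-sr}.
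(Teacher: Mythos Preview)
Your overall architecture is right and matches the paper's: maintain, for every gate $g$ and every degree $0\le p\le k$, a vector $\vec b_{g,p}$ that represents the degree-$p$ multilinear slice of $P_g$, propagate these along a topological order using the operations of Theorem~\ref{thm:main-sr}, and read off the answer as $(\vec b_{s,k}\cdot\vec R)[\emptyset]$.

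However, there is a genuine gap in your handling of multiplication gates, stemming from a misreading of ``$d$-skewed''. You write that the small input $g_2$ ``is a product of a semiring constant $c_m$ with at most $d$ variables $x_{e_1},\dots,x_{e_\delta}$, $\delta\le d$'', i.e.\ a \emph{single monomial of degree at most $d$}. That is not the definition used here (Definition~\ref{def:skew}, item 4): $d$-skewed means that one input computes a polynomial with at most $d$ \emph{monomials}, each of which may have degree up to $k$. So $P_{g_2}$ is in general a sum $\sum_{m=1}^{d'}\lambda_m\prod_{j\in S_m}x_j$ with $d'\le d$ and $|S_m|$ bounded only by $k$, not by $d$. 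Your scheme (one scalar, then $\le d$ element convolutions) therefore does not implement the multiplication gate in general.

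The fix --- and this is exactly what the paper does --- is to first run the $O(|C|\cdot d)$ preprocessing that, for every gate whose polynomial has at most $d$ monomials, explicitly lists those monomials with their coefficients (the lemma quoted from \cite{DBLP:conf/soda/LokshtanovSZ21}). Then, at a multiplication gate $g=g_1\cdot g_2$ with $P_{g_1}$ small, set
\[
\vec b_{g,p}\ =\ \sum_{z=0}^{p}\ \sum_{(\{x_{e_1},\dots,x_{e_z}\},\lambda)\in Q_{g_1}} \lambda\cdot \mathsf{convolve}\bigl(\vec b_{g_2,p-z},\{e_1,\dots,e_z\}\bigr),
\]
i.e.\ loop over the (at most $d$) monomials of the small input, and for each monomial of degree $z$ apply $z$ element convolutions to the appropriate degree slice of the other input, then add. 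This costs $O(dk)$ convolutions per pair $(g,p)$ and hence $O(dk^2)$ per multiplication gate; the extra $k^2$ is absorbed into the $2^{O(\sqrt{k}\log^2 k)}$ slack, so the stated bound still holds. With this correction your proof goes through and coincides with the paper's.
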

This result improves over a plethora of previous results via known reductions. 
For example, Theorem~\ref{thm:mul} implies a deterministic $n^{O(t)}2^{k+O(\sqrt{k}\log^2 k)}$ time algorithm for the Subgraph Isomorphism problem. In this problem one is given an undirected $n$ vertex graph $G$ and another undirected graph $P$ along with a tree decomposition of $P$ of width $t$. Then in the mentioned running time we can determine whether $P$ occurs as a subgraph of $G$. 
See e.g.~the discussion in~\cite{DBLP:journals/jacm/FominLPS16,DBLP:conf/soda/LokshtanovSZ21} for details.

Other examples of problems for which Theorem~\ref{thm:mul} gives faster deterministic algorithms are e.g. $t$-Dominating (see~\cite{DBLP:journals/talg/KoutisW16}) and Weighted $3$-set packing (see~\cite{DBLP:journals/toct/Zehavi23}).

For the special case where $\bS$ is the Boolean semiring, \textnormal{\textsc{Skewed Multilinear Monomial Summation Over $\bS$}} reduces to the \textnormal{\textsc{Skewed Multilinear Monomial Detection}} problem, which can be solved with a randomized algorithm in $O^*(2^k)$ time~\cite{DBLP:journals/talg/KoutisW16}. Our result can be seen as a derandomization of that result (at the cost the lower order order term $2^{O(\sqrt{k}\log^2k)}$ in the run time). While studying the problem over idempotent semirings may not be so natural at first sight, we believe this is an interesting generalization because:
\begin{enumerate}
	\item Algebra has played a main role in the algorithmic literature on Multilinear Detection problems (examples are the Group Algebra approach from~\cite{DBLP:journals/talg/KoutisW16}, Exterior Algebra~\cite{DBLP:conf/stoc/BrandDH18, DBLP:journals/jacm/FominLPS16}, Waring Rank~\cite{DBLP:conf/focs/Pratt19}).
	\item The extension from e.g. Boolean/Min-plus semirings to general idempotent semirings is non-trivial and relies on the existence of an induced order of the elements of $\bS$ with lattice properties (used in Lemma~\ref{lem:sisr}). Moreover, the idempotency requirement is probably needed since an extension of Theorem~\ref{thm:mul} to arbitrary semirings would imply an algorithm for counting (or even only computing the parity of) the number of simple paths on $k$ vertices that is unlikely to exist (by e.g. \#W[1]-hardness and $\oplus$W-hardness due to~\cite{DBLP:journals/siamcomp/FlumG04,DBLP:conf/esa/CurticapeanDH21}).
\end{enumerate} 
\subsection{Related work and comparison with our approach.}
The earliest application of representative sets was for the $k$-\textsc{Path} problem in~\cite{MONIEN1985239}. Other notable publications using it include~\cite{DBLP:journals/algorithmica/AlonYZ97,DBLP:journals/tcs/Marx09,DBLP:journals/talg/KratschW14,DBLP:journals/iandc/BodlaenderCKN15,DBLP:conf/soda/FominLS14,DBLP:conf/esa/Zehavi15,DBLP:journals/jcss/ShachnaiZ16,DBLP:journals/tcs/Tsur19b}.

Various extensions of representative sets have been studied in the literature, including extensions to approximate counting~\cite{DBLP:conf/soda/LokshtanovSZ21} and approximation algorithm~\cite{DBLP:journals/tcs/Zehavi16}.

Our approach builds mainly on classic tools used for the $k$-path problem such as splitters, universal sets and representative sets. 
The first use of splitters as it is done Lemma~\ref{lem:per} already occurred in~\cite{DBLP:journals/jacm/AlonYZ95}.
In particular, universal sets have been used to obtain deterministic algorithms for \textsc{Weighted $k$-Path} previously in~\cite{doi:10.1137/080716475}, and the use of ``skewed'' universal sets to compute representative sets was introduced in~\cite{DBLP:conf/soda/FominLS14,DBLP:journals/jacm/FominLPS16}. 
The aforementioned ``splitting trick' to 'use of Lemma~\ref{lem:split} to split the solution in $\sqrt{k}$ parts occurred also in previous work, see e.g.~\cite{DBLP:conf/soda/LokshtanovSZ21}.

\subsection*{Organization.}
The remainder of this paper is organized as follows:
In Section~\ref{sec:prelnot} we provide some preliminaries and notation.
In Section~\ref{sec:dynrep} we state our dynamic representative set framework and use it to prove Theorem~\ref{thm:kpath} and Theorem~\ref{thm:mul}.
In Section~\ref{sec:imp} we present the formal proof of how the dynamic representative set is achieved.

\section{Preliminaries and notation}\label{sec:prelnot}
For an integer $n$, we let $[n]$ denote the set $\{1,\ldots,n\}$. If $f:A\rightarrow B$ is a function, it is extended in the natural way to a function $f:2^A\rightarrow 2^B$ (i.e., if $X\subseteq A$, then $f(X)$ denotes $\{f(x): x \in X\}$).
\paragraph{Idempotent Semirings.}
To state our general results we need to recall some basic algebra:
\begin{definition}
	A \emph{semiring} is a quintuple $\mathbb{S}=(S,+,*,\bar{0},\bar{1})$ such that
	\begin{enumerate}
		\item $(S,+,\bar{0})$ is a commutative monoid: $(a+b)+c=a+(b+c)$ and $a+b=b+a$ for all $a,b,c \in S$, and $a+\bar{0}=a$ for all $a \in A$,
		\item $(S,*,\bar{1})$ is a monoid: $(a*b)*c=a*(b*c)$ for all $a,b,c \in S$ and $a*\bar{1}=\bar{1}*a=a$ for all $a \in A$,
		\item The multiplication operator $*$ distributes over the addition operator $+$: for every $a,b,c \in S$:
		\[
			(a+b)*c = a*c+b*c \qquad \text{ and } \qquad a*(b+c) = a*b+a*c.
		\]
		\item The additive identity $\bar{0}$ is an annihilator for multiplication: $a*\bar{0}=\bar{0}*a=\bar{0}$ for every $a \in S$.
	\end{enumerate} 
\end{definition}
The two semirings that are most relevant for this paper are the Boolean semiring $(\{0,1\},\vee,\wedge,0,1)$ (useful for solving decision problems), and the min-plus semiring $(\mathbb{Z}_{\geq 0} \cup \infty,\min,+,\infty,0)$ (useful for optimization problems.). Since the min-plus semiring is infinite, we emulate it with the \emph{$M$-capped min-plus semiring}, which is defined as the semiring $(\{0,\ldots,M\} \cup \infty,\min,+,\infty,0)$ with all integers larger than $M$ being mapped to $\infty$. It is easily verified that this is still a semiring. All these semirings have the following property:

\begin{definition}
	A semiring $(S,+,*,\bar{0},\bar{1})$ is \emph{additively idempotent} if $a+a=a$ for every $a \in A$.
\end{definition}

Assuming $\mathbb{S}=(S,+,*,\bar{0},\bar{1})$ is additively idempotent, we define a relation $\preceq_\mathbb{S}$ on $S$ as follows: $a \preceq_{\mathbb{S}} b$ if and only if $a+b=a$.
It is well known that $(S,\preceq_{\mathbb{S}})$ is a lower semilattice: It is easily seen to be a partial ordered set,\footnote{Recall, an ordering $\preceq$ is a partial ordered set if for every $x,y,z$ it holds that $x\preceq x$ (\emph{reflexivity}), $x \leq y$ and $y\leq x$ implies $x=y$ (\emph{antisymmetry}), and $x\leq y$ together with $y \leq z$ implies $x\leq z$ (\emph{transitivity}).} and the unique greatest lower bound of $a,b$ is $a+b$. To see this, suppose $c \leq a$ and $c\leq b$. Then $c+a=c+b=c$, and therefore $c+(a+b) =c$ which means that $c \preceq_{\mathbb{S}} a+b$.
Moreover, $a \preceq_\bS \bar{0}$ for every $a \in S$. This implies that, if $S$ is finite, then $\preceq_\bS$ is in fact a lattice: Given elements $x_1,\ldots,x_\ell$ we can define a \emph{least common upper bound}  
\[
	\mathrm{lcu}(x_1,\ldots,x_\ell) := \sum_{y \text{ s.t. } \forall i \in [\ell]:  x_i \preceq_\bS y} y.
\]
We assume that this $\mathrm{lcu}$ is efficiently computable and call the addition, multiplication and $\mathsf{lcu}$ operation \emph{arithmetic operations in $\bS$}. Note that in the ($M$-capped) min-plus semi-ring, $\mathsf{lcu}(x_1,\ldots,x_\ell)$ equals $\max\{x_1,\ldots,x_\ell\}$.

\paragraph{Linear Algebra.}
Vectors and matrices are denoted in boldface. They will be interchangeably be indexed both with integers and sets (assuming a fixed ordering of the set).

\begin{definition} If $\vec{A} \in \bS^{n\times m}$ and $\vec{B} \in \bS^{n'\times m'}$, their \emph{Kronecker product} $\vec{A} \otimes \vec{B} \in \bS^{([n]\times[n'])\times([m]\times[m'])}$ is defined as follows:
	\[
		(\vec{A} \otimes \vec{B})[(i,j),(i',j')]:= \vec{A}[i,i']\cdot \vec{B}[j,j'].
	\]
If $s \in\mathbb{N}_{> 0}$, we denote $\vec{A}^{\otimes s}$ for the $s$-fold Kronecker product $\vec{A}\otimes\cdots\otimes\vec{A}$.
\end{definition}
A basic property of this product used frequently and implicitly throughout this paper is the \emph{Mixed Product Property}:
\[
(\vec{A}\cdot \vec{B}) \otimes (\vec{C}\cdot \vec{D}) = (\vec{A}\otimes\vec{C}) \cdot (\vec{B}\otimes\vec{D}).
\]
Matrices and vectors are denoted in boldface throughout this paper. By default, vectors denote row vectors (i.e. $(1 \times n)$-matrices). We use $\vec{A}^\intercal$ to denote the transpose of matrix $\vec{A}$.

Throughout this paper, the semiring $\bS$ will be implicitly extended to a matrix semiring: If $\vec{A} \in \bS^{n\times m}$ and $\vec{B} \in \bS^{m\times n}$, then $(\vec{A}\cdot\vec{B})[i,k]=\sum_{j=1}^m\vec{A}[i,j]\cdot\vec{B}[j,k]$, where the summation and multiplication are in $\bS$. Likewise, the ordering $\preceq_\bS$ is extended to vectors and matrices in the natural way: If $\vec{a}, \vec{b} \in \bS^n$, then $\vec{a} \preceq_\bS \vec{b}$ denotes that $\vec{a}[i] \preceq_\bS \vec{b}[i]$ for every $i$.

\paragraph{Splitters and Universal Families.}
\newcommand{\splt}{\ensuremath{\mathrm{split}}}

Following the approach of previous work on the \textsc{Weighted Directed} $k$-\textsc{Path} problem and related problems, we rely on a number of pseudo-random families, which we will outline now.

\begin{definition}\label{def:con}
For a function $f : [a] \rightarrow [b]$, \emph{constructing $f$} means computing an array that stores $f(x)$ for every $x\in [a]$. In the word RAM model with word size at least $a+b$ this means that, after constructing $f$, we can evaluate $f$ on any argument in $O(1)$ time.
\end{definition}

\begin{definition}
	Let $S \in \binom{[n]}{k}$ and let $h: [n] \rightarrow [\ell]$. We say that \emph{$h$ splits $S$} if for some $j$
	\[
		|h^{-1}(i) \cap S| =
		\begin{cases}
			\lceil k/\ell\rceil,  & \text{ if } i \leq j,\\
			\lfloor k/ \ell \rfloor& \text{otherwise}.
		\end{cases} 
	\]
\end{definition}

\begin{definition}\label{def:spl}
An \emph{$(n,k,\ell)$-splitter} $\cH$ is a family of functions from $[n]$ to $[\ell]$ such that for every set $S \subseteq [n]$ of size $k$ there exists a function $h \in 
\cH$ that splits $S$.
\end{definition}
We will combine in a standard way the following two constructions of splitters:

\begin{lemma}[\cite{DBLP:journals/jacm/AlonYZ95}]\label{lem:per}
For any $n,k \geq 1$, one can construct a $(n,k,k^2)$-splitter of size $k^{O(1)}\log n$ in time $k^{O(1)} n \log n$. 
\end{lemma}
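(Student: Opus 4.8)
The plan is to realise the $(n,k,k^2)$-splitter as a \emph{perfect hash family}: a family $\cH$ of functions $[n]\to[k^2]$ such that for every $S\in\binom{[n]}{k}$ there is an $h\in\cH$ that is injective on $S$. This is exactly what is needed, since for $\ell=k^2\ge k$ we have $\lceil k/k^2\rceil=1$ and $\lfloor k/k^2\rfloor=0$ (assuming $k\ge 2$; the case $k=1$ is trivial), so ``$h$ splits $S$'' just says that each of the $k$ buckets hit by $h$ on $S$ receives a single element, i.e.\ that $h$ is injective on $S$. Hence the whole statement reduces to constructing a perfect hash family from $[n]$ to $[k^2]$ of size $k^{O(1)}\log n$ in time $k^{O(1)}n\log n$.

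The core mechanism is a \emph{code-based universe reduction}. Fix a prime power $q$ with $q=\Theta(k^2\log n)$ large enough that $q>\binom{k}{2}\lceil\log_2 n\rceil$, and set $d:=\lceil\log_q n\rceil$; then $q^d\ge n$, $d\le\lceil\log_2 n\rceil$, and $q>\binom{k}{2}(d-1)$. Identify the $n$ elements with $n$ distinct univariate polynomials of degree $<d$ over $\mathbb{F}_q$ (e.g.\ via base-$q$ digits), and for each evaluation point $i\in\mathbb{F}_q$ let $h_i(x):=P_x(i)\in\mathbb{F}_q\cong[q]$. If some $S\in\binom{[n]}{k}$ were such that no $h_i$ is injective on it, then in each of the $q$ coordinates some pair of $S$ would collide; averaging over the $\binom{k}{2}$ pairs, some fixed pair $\{x,y\}\subseteq S$ would satisfy $P_x(i)=P_y(i)$ for more than $d-1$ values of $i$, which is impossible for distinct polynomials of degree $<d$. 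So $\{h_i\}_{i\in\mathbb{F}_q}$ is a perfect hash family from $[n]$ to $[q]$ of size $q$, constructed in $k^{O(1)}n\log n$ time by multipoint polynomial evaluation.

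This has the wrong range, $[q]$ rather than $[k^2]$, but $q$ is now only $k^{O(1)}\log n$, which is small enough to afford a second family of size \emph{linear} in the universe: with a prime $p\in(q,2q]$ and $g_a(y):=1+\big((ay\bmod p)\bmod k^2\big)$ for $a\in\{1,\dots,p-1\}$, the standard almost-universality bound shows that for each pair $y\ne y'$ at most a $2/k^2$ fraction of the $a$'s collide, so a union bound over the $\binom{k}{2}$ pairs shows some $g_a$ is injective on any given $k$-subset of $[q]$. Then $\{\,g_a\circ h_i\,\}$ is a perfect hash family from $[n]$ to $[k^2]$: any $S\in\binom{[n]}{k}$ is sent injectively onto a $k$-set $T\subseteq[q]$ by some $h_i$, and $T$ is then sent injectively into $[k^2]$ by some $g_a$.

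The step I expect to be the real obstacle is \emph{keeping the size (and construction time) at $k^{O(1)}\log n$}. The naive composition above has size $\Theta(q\cdot p)=k^{O(1)}\log^2 n$, one logarithmic factor too many, and merely iterating the universe-reduction step $O(\log^{*}n)$ times still leaves a $k^{O(\log^{*}n)}\cdot(\log n)^{1+o(1)}$-type overhead. Removing these factors is precisely the content of the explicit splitter / perfect-hash-family constructions of~\cite{DBLP:journals/jacm/AlonYZ95}, which organise ingredients of the above flavour into a recursion with carefully balanced parameters so that only a single round carries a $\log n$ while the remaining rounds together contribute only a $k^{O(1)}$ factor. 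Invoking that construction yields a perfect hash family — hence an $(n,k,k^2)$-splitter — of size $k^{O(1)}\log n$, built in $k^{O(1)}n\log n$ time, as claimed.
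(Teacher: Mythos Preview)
The paper gives no proof of this lemma at all; it is stated purely as a citation to~\cite{DBLP:journals/jacm/AlonYZ95}. Your proposal correctly identifies that for $\ell=k^2$ the splitter condition is just injectivity (a perfect hash family), sketches the standard code-based universe reduction, but then---as you yourself note---falls one $\log n$ factor short and must invoke~\cite{DBLP:journals/jacm/AlonYZ95} to get the claimed $k^{O(1)}\log n$ size; so in the end your argument and the paper's are the same, namely a deferral to that reference.

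One definitional quibble worth flagging: the paper's Definition of ``$h$ splits $S$'' literally requires, for $\ell=k^2$, that the image $h(S)$ be exactly $\{1,\dots,k\}$ (the first $j$ buckets get one element each, the rest zero), which is strictly stronger than mere injectivity and is \emph{not} what the AYZ perfect-hash construction guarantees. However, inspecting the only use of Lemma~\ref{lem:per} in Section~\ref{sec:imp} (the matrix $\vec{H}$), one sees that only injectivity of $\pi$ on $A\cup B$ is ever used, so your reading---and the cited construction---suffice for all applications in the paper.
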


\begin{lemma}[\cite{DBLP:conf/focs/NaorSS95}, Theorem 3(i)]\label{lem:split} For $\ell = O(\sqrt{k})$ one can construct an $(n,k,\ell)$-splitter of size $k^{O(\ell)}$ in time $k^{O(\ell)}n^{O(1)}$.
\end{lemma}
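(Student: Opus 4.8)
The plan is to re-derive the construction of~\cite{DBLP:conf/focs/NaorSS95}, which proceeds in two phases: first reduce the ground set from $[n]$ to one of size polynomial in $k$ by composing with a perfect hash family, then build a splitter over this small universe by derandomizing a direct probabilistic argument.

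For the first phase, observe that the $(n,k,k^2)$-splitter $\cP$ of Lemma~\ref{lem:per} is automatically an $(n,k)$-perfect hash family into $[k^2]$: a set of size $k$ can be split into $k^2>k$ buckets only by a function that is injective on it. Hence $|\cP|=k^{O(1)}\log n$ and it is constructible in $k^{O(1)}n\log n$ time. If $\mathcal{G}$ is any $(k^2,k,\ell)$-splitter, then $\{\,g\circ p : g\in\mathcal{G},\ p\in\cP\,\}$ is an $(n,k,\ell)$-splitter: given a $k$-set $S$, choose $p\in\cP$ injective on $S$, so $p(S)$ is a $k$-subset of $[k^2]$, and choose $g\in\mathcal{G}$ splitting $p(S)$; since $p$ restricts to a bijection $S\to p(S)$, the composition $g\circ p$ splits $S$. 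So it suffices to construct $\mathcal{G}$ of size $k^{O(\ell)}$ over the fixed universe $[k^2]$, after which the overall size is $k^{O(\ell)}\log n$ (and $\log(k^2)=O(\log k)$ is negligible).

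For the second phase I would take $\mathcal{G}$ to consist of $t:=k^{O(\ell)}$ independent, uniformly random functions $[k^2]\to[\ell]$. For a fixed $k$-set $S$, the probability that a random $g$ splits $S$ is $\frac{k!}{m_1!\cdots m_\ell!}\,\ell^{-k}$, where $(m_1,\ldots,m_\ell)$ is the unique balanced tuple (each $m_i\in\{\lceil k/\ell\rceil,\lfloor k/\ell\rfloor\}$ with the larger values first), and a Stirling estimate bounds this below by $k^{-O(\ell)}$. As there are only $\binom{k^2}{k}\leq 2^{O(k\log k)}$ candidate $k$-sets, a union bound shows that for a suitable $t=k^{O(\ell)}$ some choice of $\mathcal{G}$ splits every one of them, which proves existence. (Equivalently one can peel off one part of size $\lceil k/\ell\rceil$ at a time, reducing to ``exact-$a$-cut'' families of size $\poly(k)$---each obtained from the anticoncentration bound $\Pr[\,|R\cap S|=a\,]=\Omega(1/\sqrt k)$ for a random subset $R$ of density $a/k$---and multiplying over the $\ell-1$ levels.)

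The main obstacle---and the real content of the cited theorem---is turning this existence proof into a construction that runs in $k^{O(\ell)}n^{O(1)}$ time. A naive derandomization by the method of conditional expectations is too slow, since it would have to track the not-yet-split $k$-sets, of which there are $\binom{k^2}{k}=2^{\Theta(k\log k)}$, far more than $k^{O(\ell)}$ when $\ell=O(\sqrt k)$. The fix is to exploit that the event ``$g$ splits $S$'' depends only on the $\leq k$ coordinates of $g$ indexed by $S$: one replaces the fully random $g$ by a limited-independence (or otherwise small-support) pseudorandom distribution designed so that the relevant part-size statistics are reproduced---the delicate point being the anticoncentration, which genuinely needs more than pairwise independence---and/or carries out the universe reduction and the splitting in several stages so that each derandomized stage ranges over a domain that can be searched exhaustively in $k^{O(\ell)}$ time. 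Once such a pseudorandom family is available, a greedy covering step over the (per stage, only polynomially many) objects produces $\mathcal{G}$ within the stated size and time bounds.
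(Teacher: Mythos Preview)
The paper does not prove this lemma at all; it is quoted verbatim as a black box from~\cite{DBLP:conf/focs/NaorSS95}. So there is no ``paper's own proof'' to compare against, and any correct argument you supply would already go beyond what the paper does.

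That said, your sketch is faithful to the architecture of the Naor--Schulman--Srinivasan construction: the reduction from $[n]$ to $[k^2]$ via a perfect hash family (your ``first phase'') is exactly right, and your probabilistic existence argument for a $(k^2,k,\ell)$-splitter of size $k^{O(\ell)}$ is correct---the Stirling estimate giving success probability $k^{-O(\ell)}$ and the union bound over $\binom{k^2}{k}=2^{O(k\log k)}$ sets both check out.

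Where your proposal falls short of a proof is precisely where you say it does: the derandomization. You correctly diagnose that conditional expectations over all $k$-subsets of $[k^2]$ is too expensive, and you gesture at limited independence and staged constructions, but you do not actually specify a pseudorandom object or a search procedure that achieves the $k^{O(\ell)}$ time bound. The sentence ``once such a pseudorandom family is available'' is doing all the work, and the anticoncentration requirement you flag (needing more than pairwise independence to control $\Pr[|R\cap S|=a]$) is exactly the technical heart of~\cite{DBLP:conf/focs/NaorSS95} that you have not reproduced. As written, this is an accurate outline of why the lemma is plausible and what the cited paper must overcome, but it is not itself a proof of the constructive claim.
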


Like previous related deterministic algorithms (e.g.~\cite{doi:10.1137/080716475}), our algorithms also rely on the following:

\begin{definition}\label{def:uni}
	Let $u$ and $s$ be an integers. A $(u,s)$-universal family is a family $\cF \subseteq 2^{[u]}$ such that for every $X \in \binom{[u]}{\leq s}$ we have that $\{X \cap S\mid S \in \cF \}=2^X$.
\end{definition}
\begin{lemma}[\cite{DBLP:conf/focs/NaorSS95}, Theorem 6]\label{lem:unifam}
	A $(u,s)$-universal family of size $2^ss^{O(\log s)}\log u$ can be constructed in $2^ss^{O(\log s)}\log u$ time.
\end{lemma}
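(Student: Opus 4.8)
The plan is to split the construction into two stages: first shrink the ground set $[u]$ down to one of size polynomial in $s$ by composing with an explicit perfect hash family, and then build a universal family over this small ground set by a recursion that repeatedly splits the parameter using the splitters of Lemma~\ref{lem:split}. The first stage is routine and is where the $\log u$ factor enters; all the difficulty sits in the second stage.

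\emph{Stage 1: shrinking the ground set.} I would take a $(u,s,s^2)$-splitter $\cH$ as in Lemma~\ref{lem:per}; it has size $s^{O(1)}\log u$ and is constructed in $s^{O(1)}u\log u$ time. With $\ell=s^2$ and a set of size $s$, the splitter definition forces any $h\in\cH$ that splits the set to be injective on it (and padding reduces sets of size $<s$ to this case). Assume inductively that we already have a $(s^2,s)$-universal family $\cF_0\subseteq 2^{[s^2]}$. Set
\[
\cF \;:=\; \bigl\{\, h^{-1}(S)\;:\; h\in\cH,\ S\in\cF_0\,\bigr\}.
\]
Given $X\in\binom{[u]}{\le s}$ and a target $Y\subseteq X$, choose $h\in\cH$ injective on $X$; then $h(X)$ has size $\le s$, $h(Y)\subseteq h(X)$, so some $S\in\cF_0$ satisfies $S\cap h(X)=h(Y)$, and hence $h^{-1}(S)\cap X=Y$ because $h$ is injective on $X$. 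Thus $\cF$ is $(u,s)$-universal of size $|\cH|\cdot|\cF_0|=s^{O(1)}\log u\cdot|\cF_0|$, and it remains only to construct $\cF_0$ of size $2^{s}s^{O(\log s)}$.

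\emph{Stage 2: the polynomially-sized ground set.} Here the $\log u$ issue has disappeared, since $\log(s^2)=O(\log s)$. I would build $\cF_0$ by recursion on the parameter: at a node with parameter $p$ over a ground set of size $q=p^{O(1)}$, use a splitter from Lemma~\ref{lem:split} with $\ell=\Theta(\sqrt p)$ to partition the ground set into blocks so that every $p$-subset meets each block in at most $\lceil p/\ell\rceil$ elements; within each block re-apply Stage~1's hashing to bring its ground set back down to a polynomial in the new parameter $\lceil p/\ell\rceil$ (so that the universe blow-up never compounds), recurse on the reduced parameter, and form the family at the node as all block-wise unions of the recursively chosen sets, ranging over all splitter functions. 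Bottoming out at small parameter, a crude direct construction suffices. Since the parameter shrinks geometrically there are $O(\log\log s)$ levels, and one has to choose the split widths and the base case so that the per-level splitter sizes and the base contributions multiply out to exactly $2^{s}s^{O(\log s)}$. Because each object used (splitters, re-hash functions, base families) is produced in time polynomial in its output size, and the final family has size $2^{s}s^{O(\log s)}\log u$, the whole construction runs within that time bound.

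\emph{Main obstacle.} The hard part is entirely the accounting in Stage~2: one needs the exponential base to come out exactly $2^{s}$ and not $c^{s}$ for some $c>2$, while simultaneously keeping the explicit-construction overhead down to the quasipolynomial $s^{O(\log s)}$. Naive schemes fail on one of these two fronts --- a plain binary split, or deriving the base family from $p$-wise independence or from one set per (subset, pattern) pair, inflates either the base of the exponent or the overhead to $p^{\Theta(p)}$ --- and threading this needle is exactly the content of~\cite[Theorem~6]{DBLP:conf/focs/NaorSS95}, which I would simply invoke. (For the use made of Lemma~\ref{lem:unifam} in this paper, even the weaker bound $2^{s+O(\sqrt{s}\log s)}\log u$ obtained from a single $\Theta(\sqrt s)$-way split would suffice, but we quote the sharper bound as stated in~\cite{DBLP:conf/focs/NaorSS95}.)
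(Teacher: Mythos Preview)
The paper does not prove this lemma at all; it is quoted as \cite[Theorem~6]{DBLP:conf/focs/NaorSS95} and used as a black box. Your proposal, which sketches the standard two-stage structure of the Naor--Schulman--Srinivasan construction and then explicitly defers the delicate Stage~2 accounting back to that same reference, is therefore entirely consistent with (indeed, more informative than) what the paper does.

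One small caveat on your exposition: in Stage~1 you invoke Lemma~\ref{lem:per}, whose stated construction time is $s^{O(1)}u\log u$, i.e.\ linear in $u$; and writing out the pulled-back sets $h^{-1}(S)\subseteq[u]$ likewise costs a factor of $u$. Neither is bounded by the target $2^{s}s^{O(\log s)}\log u$ for arbitrary $u$, so your final sentence (``produced in time polynomial in its output size \dots\ the whole construction runs within that time bound'') does not literally go through as written. This is harmless for the paper's single application, where $u=k^2=s^4$ and the extra $u$ is absorbed into $s^{O(\log s)}$; but for the lemma as stated one has to be a bit more careful about what ``construct'' means, or else quote the time bound from~\cite{DBLP:conf/focs/NaorSS95} directly rather than reassembling it from Lemma~\ref{lem:per}.
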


\section{Dynamic Representative Sets: Statement and Applications}\label{sec:dynrep}
We will now formally state our dynamic representative set tool. Following the discussion of Sections~\ref{sec:repsets} and~\ref{subsec:kpath} we aim to efficiently represent $\vec{a}\cdot \vec{D}_{n,k}$ in such a way that a representation of $\vec{a}\cdot\vec{C}_{n,e}\cdot\vec{D}_{n,k}$ can be quickly computed from a representation of $\vec{a}\cdot \vec{D}_{n,k}$. 
We will show this in Theorem~\ref{thm:main-sr} (and Lemma~\ref{lem:repcon}) for a notion of ``represents'' (outlined in Definition~\ref{def:rep}) that is slightly stronger than just being able to recover $\vec{a}\cdot\vec{D}_{n,k}$ (as shown in Observation~\ref{obs:query}).

Fix an additively idempotent semiring $\mathbb{S}=(S, +, \cdot,\bar{0},\bar{1})$.
We encourage reader to assume $\bS$ is the min-sum semi-ring $(\mathbb{Z}_\geq,\min,+,\infty,0)$ during their first read. In this case the order $\preceq_\bS$ (defined in Section~\ref{sec:prelnot}) reduces to the natural order on integers and $\mathsf{lcu}$ reduces to $\max$. In this section, $+,\cdot$ refers to the operations of $\mathbb{S}$. Likewise, the Iverson bracket notation $\llbracket b \rrbracket$, for a boolean $b$, indicates $\bar{0}$ if $b$ is false, and $\bar{1}$ if $b$ is true.

Throughout this section, we fix two integers $k,n$ with $k\leq n$. Recall from~Subsection~\ref{subsec:kpath} that	
\[
\vec{C}_{n,e}[A,B] := \llbracket e \notin A \text{ and } A \cup \{e\} = B \rrbracket \qquad \text{and}\qquad
\vec{D}_{n,k}[A,B] :=	\llbracket A \cap B = \emptyset \text{ and } |A \cup B|\leq k \rrbracket.
\]	
A crucial property is that these matrices commute in the following sense:
\begin{lemma}\label{lem:comm} 
	If $e \in U$, then $\vec{C}_{n,e} \cdot  \vec{D}_{n,k} = \vec{D}_{n,k}\cdot  \vec{C}_{n,e}^\intercal$.
\end{lemma}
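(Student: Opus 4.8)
The statement to prove is that $\vec{C}_{n,e} \cdot \vec{D}_{n,k} = \vec{D}_{n,k} \cdot \vec{C}_{n,e}^\intercal$ as matrices indexed by subsets of $[n]$. Since both sides are matrices over the semiring $\bS$ whose entries are images of Iverson brackets, the natural approach is to compute the $(A,C)$-entry of each side directly by expanding the matrix product as a sum over an intermediate index $B$, and show the two resulting sums are equal. Because all the matrices involved have $\{\bar 0,\bar 1\}$ entries and $\bS$ is additively idempotent, each matrix product entry is simply $\bar 1$ if at least one term in the sum is $\bar 1$ and $\bar 0$ otherwise; so it suffices to show that the two sides have the same support, i.e. the same set of $(A,C)$ pairs where the entry is $\bar 1$.

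**Key steps.** First, I would unfold the left-hand side: $(\vec{C}_{n,e}\cdot\vec{D}_{n,k})[A,C] = \sum_B \vec{C}_{n,e}[A,B]\cdot\vec{D}_{n,k}[B,C]$. The term is $\bar 1$ exactly when there is a $B$ with $e\notin A$, $B = A\cup\{e\}$, $B\cap C=\emptyset$, and $|B\cup C|\le k$. Since $B$ is forced to be $A\cup\{e\}$, this says: $e\notin A$, $(A\cup\{e\})\cap C=\emptyset$, and $|A\cup\{e\}\cup C|\le k$. Second, I would unfold the right-hand side: $(\vec{D}_{n,k}\cdot\vec{C}_{n,e}^\intercal)[A,C] = \sum_B \vec{D}_{n,k}[A,B]\cdot\vec{C}_{n,e}^\intercal[B,C] = \sum_B \vec{D}_{n,k}[A,B]\cdot\vec{C}_{n,e}[C,B]$. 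This term is $\bar 1$ exactly when there is a $B$ with $A\cap B=\emptyset$, $|A\cup B|\le k$, $e\notin C$, and $B = C\cup\{e\}$. Again $B$ is forced to be $C\cup\{e\}$, giving: $e\notin C$, $A\cap(C\cup\{e\})=\emptyset$, and $|A\cup C\cup\{e\}|\le k$. Third, I would observe that the two derived conditions are literally the same: "$e\notin A$ and $(A\cup\{e\})\cap C = \emptyset$" is equivalent to "$A\cap C=\emptyset$, $e\notin A$, $e\notin C$", which is symmetric in $A$ and $C$, and it is also equivalent to "$e\notin C$ and $A\cap(C\cup\{e\})=\emptyset$"; and the cardinality constraint $|A\cup\{e\}\cup C|\le k$ is identical on both sides. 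Hence the two entries agree for every $(A,C)$, proving the lemma. I should also briefly note the boundary behavior (if $e\in A$ the left entry is $\bar 0$ and correspondingly $e\in A$ makes $A\cap(C\cup\{e\})\ne\emptyset$ so the right entry is $\bar 0$ too), so no case is missed.

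**Main obstacle.** There is no real obstacle here — the proof is a short direct computation. The only thing requiring a little care is the bookkeeping with the transpose: one must remember that $\vec{C}_{n,e}^\intercal[B,C] = \vec{C}_{n,e}[C,B] = \llbracket e\notin C \text{ and } C\cup\{e\}=B\rrbracket$, so that the "pivot" element $B$ on the right-hand side is pinned to $C\cup\{e\}$ rather than $A\cup\{e\}$. Once that is written down correctly, collapsing the singleton sums (each sum over $B$ has at most one nonzero term because of the equality constraint $B = A\cup\{e\}$ resp. $B = C\cup\{e\}$) and checking the symmetric rewriting of the disjointness-plus-exclusion condition is entirely mechanical. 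I would present it as a two-line entrywise verification.
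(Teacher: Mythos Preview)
Your proposal is correct and follows essentially the same approach as the paper: both compute the $(A,C)$-entry of each side by expanding the matrix product, observe that the sum over the intermediate index collapses to a single term (since $\vec{C}_{n,e}$ has at most one nonzero entry per row), and then verify that the resulting conditions on $A$, $C$, and $e$ are identical. The paper phrases the common condition as ``$e\notin A\cup B$, $A\cap B=\emptyset$, and $|A\cup B|\le k-1$'', which is exactly your symmetric reformulation.
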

\begin{proof}
	By expanding the matrix multiplications, we have
	\[
	\begin{aligned}
		(\vec{C}_{n,e}\cdot \vec{D}_{n,k})[A,B] &= \begin{cases}
			\mathrlap{\vec{D}_{n,k}[A \cup \{e\},B],}\hphantom{\vec{D}_{n,k}[A,B \cup \{e\}],} & \text{if $e \notin A$},\\
			\bar{0}, & \text{otherwise},
		\end{cases}\\
		(\vec{D}_{n,k}\cdot \vec{C}^\intercal_{n,e})[A,B] &= \begin{cases}
			\vec{D}_{n,k}[A,B \cup \{e\}], & \text{if $e \notin B$},\\
			\bar{0}, & \text{otherwise}.
		\end{cases}	
	\end{aligned}
	\]
	Hence, both terms are equal to $\bar{1}$ if $e \not\in A\cup B$, $A \cap B = \emptyset$ and $|A \cup B|\leq k-1$, and otherwise both terms are equal to $\bar{0}$.
\end{proof}
We now state our main tool, discuss its utility in this section and postpone its proof to  Section~\ref{sec:imp}.
\begin{theorem}[Dynamic Representative set]\label{thm:main-sr}
	Let $\mathbb{S}=(S,+,*,\bar{0},\bar{1})$ be an additively idempotent semiring and let $n,k$ be integers satisfying $k\leq n$. There exists a factorization  $\vec{D}_{n,k}=\vec{L}\cdot\vec{R}$ of rank $r \leq 2^{k+O(\sqrt{k}\cdot \log^2k)}\log n$ such that, after $2^{O(\sqrt{k}\cdot \log^2k)}n \log n$ preprocessing time:
	\begin{enumerate}
		\item[\hypertarget{P1}{\normalfont \textcolor{gray}{\textbf{(P1)}}}] Any entry of $\vec{L}$ and $\vec{R}$ can be computed in $k^{O(1)}$ time, assuming the subset of $[n]$ that indexes the row of $\vec{L}$ (or the column of $\vec{R}$) is given as sequence of its elements,\footnote{Entries of $\vec{L}$ and $\vec{R}$ indexed by subsets of more than $k$ elements actually always are equal to $\bar{0}$.}
		\item[\hypertarget{P2}{\normalfont \textcolor{gray}{\textbf{(P2)}}}] there is an algorithm $\mathsf{convolve}(\vec{b},e)$ that takes as input a vector $\vec{b} \in \mathbb{S}^{r}$ and an element $e \in [n]$, and it computes in $2^{k+O(\sqrt{k}\cdot \log^2 k)} \log n$ time and arithmetic operations in $\bS$ (i.e. addition, multiplication, $\mathsf{lcu}$) a vector $\vec{b'} \in \bS^r$ such that $\vec{b}\cdot\vec{R}\cdot\vec{C}^\intercal_{n,e} \preceq_\bS \vec{b'}\cdot\vec{R}$ and
		\[
		\text{for all $\vec{a} \in \bS^{2^{[n]}}$ such that } \vec{b} \preceq_\mathbb{S} \vec{a} \cdot \vec{L}, 
		\text{ it holds that } \vec{b'} \preceq_\mathbb{S} \vec{a}\cdot \vec{C}_{n,e} \cdot \vec{L}.
		\]
	\end{enumerate}
	 Here all time bounds are with respect to the Word RAM model with word size $\log n$.	 
\end{theorem}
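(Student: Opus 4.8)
The plan is to realize the factorization via the splitting trick combined with the universal‐family construction, reducing the global disjointness matrix $\vec{D}_{n,k}$ to a Kronecker power of a much smaller matrix $\vec{D}_{u,s}$ on a universe of size $u = k^2$ with $s = \Theta(\sqrt{k})$ blocks. Concretely, I would first compose the two splitters from Lemma~\ref{lem:per} and Lemma~\ref{lem:split}: the first maps $[n]$ into $[k^2]$ and the second splits $[k^2]$ into $s = O(\sqrt{k})$ blocks so that any size-$k$ solution $P$ distributes with exactly $s$ (or $\lceil k/s\rceil$) elements in each block. For a fixed choice of splitter functions $(h_1,h_2)$ in the splitter families, the disjointness test on $[n]$ restricted to solutions consistent with $(h_1,h_2)$ becomes a product of $s$ independent disjointness tests, one per block, each captured by $\vec{D}_{u,s}$ where $u=k^2$. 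So the full $\vec{L},\vec{R}$ will be block-diagonal over the polynomially-many-times-$\log n$ splitter pairs, with each diagonal block being $\vec{X}^{\otimes s}$ and $\vec{Y}^{\otimes s}$ respectively, for a Boolean rank factorization $\vec{D}_{u,s} = \vec{X}\cdot\vec{Y}$ coming from the $(u,s)$-universal family of Lemma~\ref{lem:unifam} — giving rank $\ell = 2^{s} s^{O(\log s)}\log u$ per block, hence total rank $r = 2^{k + O(\sqrt{k}\log^2 k)}\log n$ after raising to the $s$-th power and summing over splitter pairs. This settles the rank bound and property \hyperlink{P1}{(P1)}: each entry of $\vec{X}$ (membership-style queries into the universal family) and the splitter evaluations are computable in $k^{O(1)}$ time once the families are constructed in $2^{O(\sqrt{k}\log^2 k)} n\log n$ preprocessing.

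The heart of the argument is property \hyperlink{P2}{(P2)}, the $\mathsf{convolve}$ operation. The intuition section already lays out the route: work block by block (the element $e$ lands in block $i$ after hashing), and within that block perform a one-sided "semiring inversion" of $\vec{X}$. I would formalize this through a lemma (this is the role of the cited Lemma~\ref{lem:sisr}): given $\vec{b}$, compute the \emph{unique $\preceq_\bS$-minimal} vector $\vec{a}^*_{\mathsf{ext}}$ such that $\vec{b} \preceq_\bS \vec{a}^*_{\mathsf{ext}}\cdot\vec{X}$ (in the relevant block coordinate), using the lattice structure of $\preceq_\bS$ and the $\mathsf{lcu}$ operation — minimality exists because $\preceq_\bS$ is a finite lower semilattice with a top element $\bar 0$. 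Then output $\vec{b'} := \vec{a}^*_{\mathsf{ext}}\cdot(\vec{I}^{\otimes i-1}\otimes \vec{C}_{u,e}\cdot\vec{X}\otimes\vec{I}^{\otimes s-i})$ restricted appropriately. The upper-bound half of \hyperlink{P2}{(P2)} — that $\vec{b'}\preceq_\bS \vec{a}\cdot\vec{C}_{n,e}\cdot\vec{L}$ whenever $\vec{b}\preceq_\bS\vec{a}\cdot\vec{L}$ — follows from the minimality of $\vec{a}^*_{\mathsf{ext}}$: any $\vec{a}$ witnessing $\vec{b}\preceq_\bS \vec{a}\cdot\vec{L}$ lies above $\vec{a}^*_{\mathsf{ext}}$, and multiplying by the monotone map $\vec{C}_{n,e}\cdot\vec{L}$ preserves $\preceq_\bS$ (all semiring matrix products are monotone under $\preceq_\bS$). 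The lower-bound half — that $\vec{b}\cdot\vec{R}\cdot\vec{C}^\intercal_{n,e} \preceq_\bS \vec{b'}\cdot\vec{R}$ — is where one genuinely uses that $\vec{C}_{u,e}$ and $\vec{D}_{u,s}$ \emph{commute} (Lemma~\ref{lem:comm}, applied blockwise): one chases $\vec{b}\cdot\vec{R}\cdot\vec{C}^\intercal_{n,e} \preceq_\bS \vec{a}^*_{\mathsf{ext}}\cdot\vec{L}\cdot\vec{R}\cdot\vec{C}^\intercal = \vec{a}^*_{\mathsf{ext}}\cdot\vec{D}\cdot\vec{C}^\intercal = \vec{a}^*_{\mathsf{ext}}\cdot\vec{C}\cdot\vec{D} = \vec{a}^*_{\mathsf{ext}}\cdot\vec{C}\cdot\vec{L}\cdot\vec{R}$, then bounds this above by $\vec{b'}\cdot\vec{R}$ using the definition of $\vec{b'}$ and $\vec{D} = \vec{X}\cdot\vec{Y} \preceq_\bS$-compatibility. (I would also need to handle the $|A\cup B|\le k$ capping constraint correctly under the splitting — this is where the splitter guarantees that each block sees at most $s$ solution elements, so $\vec{D}_{u,s}$ with its own cap suffices.)

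The running time of $\mathsf{convolve}$ is dominated by the inversion step within a single block tensored against the identity on the other $s-1$ blocks: this is $\ell^{s} \cdot \poly = 2^{k + O(\sqrt{k}\log^2 k)}$ per splitter pair, times the $\poly(k)\log n$ splitter pairs, hence $2^{k+O(\sqrt{k}\log^2k)}\log n$ arithmetic operations, matching the claim; one must check the inversion lemma itself runs in time linear in its output size times $\poly$, using that the universal family is sparse enough. The main obstacle I anticipate is making the "semiring inversion" lemma precise and verifying both of its guarantees simultaneously — the minimality property drives the upper bound while a separate computation involving the commutation of $\vec{C}$ and $\vec{D}$ drives the lower bound, and these must be compatible invariants that survive a sequence of $\mathsf{convolve}$ calls across different blocks (the concern raised in the intuition section that the inversion is "lossy" and could produce a vector that is no longer exactly in the image of $\vec{X}$ in some other block). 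Resolving this cleanly is exactly the point of the two-sided invariant in Definition~\ref{def:rep}, and I would structure the proof so that $\mathsf{convolve}$ maintains that invariant rather than trying to maintain exact membership in an image.
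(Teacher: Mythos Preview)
Your proposal is correct and follows essentially the same approach as the paper: compose the $(n,k,k^2)$-splitter with the $(k^2,k,s)$-splitter to reduce to $\vec{D}_{u,s}^{\otimes s}$, factor $\vec{D}_{u,s}=\vec{X}\cdot\vec{Y}$ via the $(u,s)$-universal family, and implement $\mathsf{convolve}$ by the $\mathsf{lcu}$-based minimal-preimage inversion of $\vec{X}$ in the single block containing the hashed element, deriving the two halves of \Ptwo{} from minimality and from the commutation $\vec{C}\cdot\vec{D}=\vec{D}\cdot\vec{C}^\intercal$ respectively. The paper packages the block-diagonal structure through an explicit ``hashing matrix'' $\vec{H}$ and an auxiliary commutation claim $\vec{C}_{n,e}\cdot\vec{H}=\vec{H}\cdot\hat{\vec{C}}_e$, and your chain $\vec{a}^*_{\mathsf{ext}}\cdot\vec{L}\cdot\vec{R}\cdots$ should really read $\vec{a}^*_{\mathsf{ext}}\cdot(\cdots\otimes\vec{X}\otimes\cdots)\cdot\vec{R}\cdots$ since $\vec{a}^*_{\mathsf{ext}}$ lives in an intermediate space, but these are notational refinements rather than missing ideas.
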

\newcommand{\Pone}{\hyperlink{P1}{\normalfont \textcolor{gray}{\textbf{(P1)}}}}
\newcommand{\Ptwo}{\hyperlink{P2}{\normalfont \textcolor{gray}{\textbf{(P2)}}}}
\newcommand{\Pthree}{\hyperlink{P3}{\normalfont \textcolor{gray}{\textbf{(P3)}}}}
\newcommand{\Pfour}{\hyperlink{P4}{\normalfont \textcolor{gray}{\textbf{(P4)}}}}
\newcommand{\Pfive}{\hyperlink{P5}{\normalfont \textcolor{gray}{\textbf{(P5)}}}}
While the statement of Theorem~\ref{thm:main-sr} is convenient for its proof, we provide some simple definitions and observations to facilitate its use.
In their statements, we let integers $n,k,r$ and matrices $\vec{L},\vec{R}$ be as in Theorem~\ref{thm:main-sr}.
We first connect Theorem~\ref{thm:main-sr} to its name ``dynamic representative set'':
\begin{definition}\label{def:rep}
	A vector $\vec{b} \in \bS^r$ \emph{represents} a vector $\vec{a} \in \bS^{2^{[n]}}$ if $\vec{a}\cdot\vec{D}_{n,k} \preceq_\bS \vec{b}\cdot\vec{R}$ and $\vec{b} \preceq_\bS \vec{a}\cdot\vec{L}$.
\end{definition}
Note $\vec{b} \preceq_\bS \vec{a}\cdot\vec{L}$ implies $\vec{b}\cdot\vec{R} \preceq_\bS \vec{a}\cdot\vec{L}\cdot\vec{R}=\vec{a}\cdot\vec{D}_{n,k}$. Thus, by anti-symmetry of $\preceq_\bS$ we have:
\begin{observation}\label{obs:query}
	If $\vec{b}$ represents $\vec{a}$ then $\vec{a}\cdot\vec{D}_{n,k}=\vec{b}\cdot\vec{R}$.
\end{observation}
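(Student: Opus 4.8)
The plan is to unwind Definition~\ref{def:rep} and then combine its two inequalities by antisymmetry of $\preceq_\bS$. By definition, ``$\vec{b}$ represents $\vec{a}$'' asserts that both $\vec{a}\cdot\vec{D}_{n,k}\preceq_\bS\vec{b}\cdot\vec{R}$ and $\vec{b}\preceq_\bS\vec{a}\cdot\vec{L}$ hold, so it suffices to derive the reverse inequality $\vec{b}\cdot\vec{R}\preceq_\bS\vec{a}\cdot\vec{D}_{n,k}$ from the latter and then invoke antisymmetry of $\preceq_\bS$ (part of its being a partial order, as recalled in Section~\ref{sec:prelnot}).

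First I would record the elementary fact that $\preceq_\bS$ is monotone under matrix multiplication over an additively idempotent semiring: if $\vec{x}\preceq_\bS\vec{y}$ then $\vec{x}\cdot\vec{M}\preceq_\bS\vec{y}\cdot\vec{M}$ for any conformable $\vec{M}$. This reduces to the scalar statements that $a\preceq_\bS a'$ implies $a*c\preceq_\bS a'*c$ and $a+c\preceq_\bS a'+c$; the former follows from distributivity, since $(a*c)+(a'*c)=(a+a')*c=a*c$, and the latter from $(a+c)+(a'+c)=(a+a')+(c+c)=a+c$ using commutativity, associativity and idempotency of $+$. Summing these relations entrywise over the rows and columns involved in a matrix product yields the claimed monotonicity.

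Applying this monotonicity to $\vec{b}\preceq_\bS\vec{a}\cdot\vec{L}$, multiplying on the right by $\vec{R}$, gives $\vec{b}\cdot\vec{R}\preceq_\bS\vec{a}\cdot\vec{L}\cdot\vec{R}$, and since $\vec{D}_{n,k}=\vec{L}\cdot\vec{R}$ is precisely the factorization guaranteed by Theorem~\ref{thm:main-sr}, the right-hand side equals $\vec{a}\cdot\vec{D}_{n,k}$. Combined with the hypothesis $\vec{a}\cdot\vec{D}_{n,k}\preceq_\bS\vec{b}\cdot\vec{R}$, we now have inequalities in both directions, so antisymmetry of $\preceq_\bS$ forces $\vec{a}\cdot\vec{D}_{n,k}=\vec{b}\cdot\vec{R}$. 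There is no genuine obstacle in this argument: the only point that deserves to be spelled out is the monotonicity of the induced order, which is exactly where additive idempotency of $\bS$ is used, and everything else is an immediate consequence of the definitions.
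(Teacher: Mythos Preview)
Your proof is correct and follows exactly the approach the paper takes: from $\vec{b}\preceq_\bS\vec{a}\cdot\vec{L}$ one right-multiplies by $\vec{R}$ to get $\vec{b}\cdot\vec{R}\preceq_\bS\vec{a}\cdot\vec{L}\cdot\vec{R}=\vec{a}\cdot\vec{D}_{n,k}$, and then antisymmetry finishes. The only difference is that you spell out the monotonicity of $\preceq_\bS$ under matrix multiplication, which the paper uses implicitly.
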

The utility of Theorem~\ref{thm:main-sr} lies in that it preserves representation:
\begin{lemma}\label{lem:repcon}
	If $\vec{b}$ represents $\vec{a}$, then $\mathsf{convolve}(\vec{b},e)$ represents $\vec{a}\cdot\vec{C}_{n,e}$. 
\end{lemma}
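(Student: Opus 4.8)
The plan is to verify the two conditions in Definition~\ref{def:rep} for the pair $(\mathsf{convolve}(\vec{b},e),\ \vec{a}\cdot\vec{C}_{n,e})$, using that $\vec{b}$ already represents $\vec{a}$. Write $\vec{b'}=\mathsf{convolve}(\vec{b},e)$. We must show (i) $(\vec{a}\cdot\vec{C}_{n,e})\cdot\vec{D}_{n,k} \preceq_\bS \vec{b'}\cdot\vec{R}$, and (ii) $\vec{b'} \preceq_\bS (\vec{a}\cdot\vec{C}_{n,e})\cdot\vec{L}$. Condition (ii) is essentially immediate: since $\vec{b}$ represents $\vec{a}$ we have $\vec{b}\preceq_\bS \vec{a}\cdot\vec{L}$, and this is exactly the hypothesis under which property \Ptwo{} of Theorem~\ref{thm:main-sr} guarantees $\vec{b'}\preceq_\bS \vec{a}\cdot\vec{C}_{n,e}\cdot\vec{L}$. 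So (ii) follows directly by instantiating \Ptwo{} with this $\vec{a}$.

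For condition (i), the idea is to chain the upper-bound inequality from \Ptwo{} with the representation hypothesis on $\vec{b}$ and the commutation Lemma~\ref{lem:comm}. First, \Ptwo{} gives $\vec{b}\cdot\vec{R}\cdot\vec{C}^\intercal_{n,e} \preceq_\bS \vec{b'}\cdot\vec{R}$. Next, since $\vec{b}$ represents $\vec{a}$, Observation~\ref{obs:query} gives $\vec{a}\cdot\vec{D}_{n,k}=\vec{b}\cdot\vec{R}$, so $\vec{a}\cdot\vec{D}_{n,k}\cdot\vec{C}^\intercal_{n,e} = \vec{b}\cdot\vec{R}\cdot\vec{C}^\intercal_{n,e} \preceq_\bS \vec{b'}\cdot\vec{R}$. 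Finally, by Lemma~\ref{lem:comm} we have $\vec{D}_{n,k}\cdot\vec{C}^\intercal_{n,e} = \vec{C}_{n,e}\cdot\vec{D}_{n,k}$, so the left-hand side equals $\vec{a}\cdot\vec{C}_{n,e}\cdot\vec{D}_{n,k}$, which is exactly what condition (i) requires. Here I am using that left-multiplication by a fixed matrix is monotone with respect to $\preceq_\bS$ (which holds because $\preceq_\bS$ is defined by $a+b=a$ and the semiring operations are monotone in each argument over an additively idempotent semiring), so the equality $\vec{a}\cdot\vec{D}_{n,k}=\vec{b}\cdot\vec{R}$ can be post-multiplied by $\vec{C}^\intercal_{n,e}$ and preserves the inequality chain.

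Putting (i) and (ii) together with Definition~\ref{def:rep} immediately yields that $\vec{b'}=\mathsf{convolve}(\vec{b},e)$ represents $\vec{a}\cdot\vec{C}_{n,e}$, which is the claim.

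I do not expect a serious obstacle here: the lemma is a short bookkeeping argument that assembles three already-established facts (the two inequalities of \Ptwo{}, Observation~\ref{obs:query}, and Lemma~\ref{lem:comm}). The only point deserving a moment of care is the monotonicity of matrix multiplication under $\preceq_\bS$ used to pass from the entrywise equality $\vec{a}\cdot\vec{D}_{n,k}=\vec{b}\cdot\vec{R}$ to statements about $\vec{a}\cdot\vec{C}_{n,e}\cdot\vec{D}_{n,k}$; since this is a generic property of additively idempotent semirings it can be invoked without fuss, but it is worth stating explicitly so the reader sees why the $\preceq_\bS$ direction is preserved when we multiply through by $\vec{C}^\intercal_{n,e}$. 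Everything else is direct substitution.
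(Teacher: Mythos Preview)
Your proposal is correct and follows essentially the same approach as the paper: both verify the two conditions of Definition~\ref{def:rep} by combining the two guarantees of \Ptwo{} with Lemma~\ref{lem:comm}. The only cosmetic difference is that for condition (i) you invoke Observation~\ref{obs:query} to get the equality $\vec{a}\cdot\vec{D}_{n,k}=\vec{b}\cdot\vec{R}$, whereas the paper uses the inequality $\vec{a}\cdot\vec{D}_{n,k}\preceq_\bS \vec{b}\cdot\vec{R}$ from Definition~\ref{def:rep} directly together with monotonicity of right-multiplication by $\vec{C}^\intercal_{n,e}$; since Observation~\ref{obs:query} is already established at this point, either route is fine.
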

\begin{proof}
	Let $\vec{b'}=\mathsf{convolve}(\vec{b},e)$. 
	For the first condition of Definition~\ref{def:rep}, note that by Lemma~\ref{lem:comm} we have that
	\[
	\vec{a}\cdot\vec{C}_{n,e}\cdot\vec{D}_{n,k}=\vec{a}\cdot\vec{D}_{n,k}\cdot\vec{C}^\intercal_{n,e} \preceq_{\bS} \vec{b}\cdot\vec{R}\cdot\vec{C}^\intercal_{n,e}\preceq_\bS \vec{b'}\cdot\vec{R},
	\]
	where the two inequalities follow from respectively $\vec{b}$ representing $\vec{a}$, and \Ptwo{} of Theorem~\ref{thm:main-sr}.
	
	For the second condition of Definition~\ref{def:rep}, note that since $\vec{b} \preceq \vec{a}\cdot\vec{L}$, \Ptwo{} of Theorem~\ref{thm:main-sr} implies $\vec{b'}\preceq_\bS \vec{a}\cdot\vec{C}_{n,e}\cdot\vec{L}$.
\end{proof}

The next two observations follow directly from Definition~\ref{def:rep} and the axioms of linear algebra:
\begin{observation}\label{obs:lin}
	If $\vec{b}$ represents $\vec{a}$ and $\vec{b'}$ represents $\vec{a'}$, then $\vec{b}+\vec{b'}$ represents $\vec{a}+\vec{a'}$.
\end{observation}
\begin{observation}\label{obs:scal}
	If $\lambda \in \bS$ and $\vec{b}$ represents $\vec{a}$ then $\lambda\cdot \vec{b}$ represents $\lambda\cdot\vec{a}$ and $\vec{b}\cdot\lambda$ represents $\vec{a}\cdot\lambda$.
\end{observation}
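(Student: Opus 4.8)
The plan is to reduce both claims to a single monotonicity fact about scalar multiplication and then feed that into Definition~\ref{def:rep}.

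First I would record the auxiliary claim: for any $x,y \in S$ with $x \preceq_\bS y$ and any $\lambda \in S$, we have $\lambda*x \preceq_\bS \lambda*y$ and $x*\lambda \preceq_\bS y*\lambda$. This is immediate from the definition of $\preceq_\bS$ together with distributivity: $x \preceq_\bS y$ means $x+y=x$, hence $\lambda*x+\lambda*y = \lambda*(x+y) = \lambda*x$, i.e. $\lambda*x \preceq_\bS \lambda*y$, and the right-hand version is symmetric. Applying this entrywise, whenever $\vec{u}\preceq_\bS\vec{v}$ for vectors $\vec{u},\vec{v}$ of equal length, we get $\lambda\cdot\vec{u}\preceq_\bS\lambda\cdot\vec{v}$ and $\vec{u}\cdot\lambda\preceq_\bS\vec{v}\cdot\lambda$.

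Next I would use associativity and distributivity in the matrix semiring to push the scalar through matrix products, namely the identities $(\lambda\cdot\vec{b})\cdot\vec{R}=\lambda\cdot(\vec{b}\cdot\vec{R})$, $(\lambda\cdot\vec{a})\cdot\vec{D}_{n,k}=\lambda\cdot(\vec{a}\cdot\vec{D}_{n,k})$, and $(\lambda\cdot\vec{a})\cdot\vec{L}=\lambda\cdot(\vec{a}\cdot\vec{L})$, together with the analogous identities with the scalar on the right. With these in hand the proof is a short computation: since $\vec{b}$ represents $\vec{a}$, Definition~\ref{def:rep} gives $\vec{a}\cdot\vec{D}_{n,k}\preceq_\bS\vec{b}\cdot\vec{R}$ and $\vec{b}\preceq_\bS\vec{a}\cdot\vec{L}$; multiplying the first by $\lambda$ on the left via the monotonicity claim and moving the scalar inside yields $(\lambda\cdot\vec{a})\cdot\vec{D}_{n,k}\preceq_\bS(\lambda\cdot\vec{b})\cdot\vec{R}$, and doing the same to the second yields $\lambda\cdot\vec{b}\preceq_\bS(\lambda\cdot\vec{a})\cdot\vec{L}$, which are precisely the two conditions of Definition~\ref{def:rep} witnessing that $\lambda\cdot\vec{b}$ represents $\lambda\cdot\vec{a}$. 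The claim that $\vec{b}\cdot\lambda$ represents $\vec{a}\cdot\lambda$ follows identically using the right-hand versions of the identities.

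There is essentially no obstacle here: the only point worth double-checking is the monotonicity of scalar multiplication with respect to $\preceq_\bS$, which relies on distributivity (and is where the additive idempotency of $\bS$ implicitly enters, since $\preceq_\bS$ is only defined in that setting). Everything else is bookkeeping with the axioms of the matrix semiring.
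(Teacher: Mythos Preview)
Your proposal is correct and is exactly a fleshed-out version of what the paper leaves implicit: the paper states only that Observations~\ref{obs:lin} and~\ref{obs:scal} ``follow directly from Definition~\ref{def:rep} and the axioms of linear algebra,'' and your argument unpacks precisely those axioms (monotonicity of multiplication with respect to $\preceq_\bS$ via distributivity, and pushing scalars through matrix products). One minor remark: for the right-scalar identities such as $(\vec{a}\cdot\lambda)\cdot\vec{L}=(\vec{a}\cdot\vec{L})\cdot\lambda$ you implicitly use that $\lambda$ commutes with the entries of $\vec{L}$, $\vec{R}$, and $\vec{D}_{n,k}$; this is fine here since those matrices take values in $\{\bar{0},\bar{1}\}$, but it is worth noting explicitly since $\bS$ is not assumed commutative.
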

The final observation computes a representation of an initial vector: 
\begin{observation}\label{obs:init}
	We can compute in $2^{k+O(\sqrt{k}\cdot \log^2k)}n \log n$ time the vector $\vec{b}_{\mathsf{init}}:=\vec{a}_{\mathsf{init}}\cdot\vec{L}$, where $\vec{a}_{\mathsf{init}}$ denotes the vector that satisfies $\vec{a}_{\mathsf{init}}[A]= \llbracket A = \emptyset \rrbracket$, and $\vec{b}_{\mathsf{init}}$ represents $\vec{a}_{\mathsf{init}}$.
\end{observation}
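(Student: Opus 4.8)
The plan is to derive this directly from Theorem~\ref{thm:main-sr}, with essentially no work beyond unwinding the definitions. First I would invoke Theorem~\ref{thm:main-sr} to obtain the rank-$r$ factorization $\vec{D}_{n,k}=\vec{L}\cdot\vec{R}$ and to run its preprocessing, which costs $2^{O(\sqrt{k}\log^2k)}n\log n$ time and is therefore within the claimed budget.

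The key observation is that $\vec{a}_{\mathsf{init}}$ is the indicator vector of the empty set, so for every column index $j\in[r]$ we have $\vec{b}_{\mathsf{init}}[j]=(\vec{a}_{\mathsf{init}}\cdot\vec{L})[j]=\sum_{A\subseteq[n]}\vec{a}_{\mathsf{init}}[A]\cdot\vec{L}[A,j]=\vec{L}[\emptyset,j]$; that is, $\vec{b}_{\mathsf{init}}$ is exactly the row of $\vec{L}$ indexed by $\emptyset$. By \Pone{} of Theorem~\ref{thm:main-sr}, each entry $\vec{L}[\emptyset,j]$ can be computed in $k^{O(1)}$ time (the row index $\emptyset$ is supplied to the procedure as the empty sequence of elements), so computing all $r\le 2^{k+O(\sqrt{k}\log^2k)}\log n$ entries takes $r\cdot k^{O(1)}=2^{k+O(\sqrt{k}\log^2k)}\log n$ time; adding the preprocessing cost, the total is at most $2^{k+O(\sqrt{k}\log^2k)}n\log n$, as claimed.

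It then remains to verify that $\vec{b}_{\mathsf{init}}$ represents $\vec{a}_{\mathsf{init}}$ in the sense of Definition~\ref{def:rep}. The second condition, $\vec{b}_{\mathsf{init}}\preceq_\bS\vec{a}_{\mathsf{init}}\cdot\vec{L}$, holds with equality by construction, hence by reflexivity of $\preceq_\bS$. For the first condition, I would use the factorization to compute $\vec{b}_{\mathsf{init}}\cdot\vec{R}=\vec{a}_{\mathsf{init}}\cdot\vec{L}\cdot\vec{R}=\vec{a}_{\mathsf{init}}\cdot\vec{D}_{n,k}$, so $\vec{a}_{\mathsf{init}}\cdot\vec{D}_{n,k}\preceq_\bS\vec{b}_{\mathsf{init}}\cdot\vec{R}$ likewise holds with equality.

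There is no genuine obstacle here; the only point needing a little care is the running-time bookkeeping — specifically that the per-entry cost $k^{O(1)}=2^{O(\log k)}$ is absorbed into the $2^{O(\sqrt{k}\log^2k)}$ factor, and that the preprocessing from Theorem~\ref{thm:main-sr} does not dominate the final bound.
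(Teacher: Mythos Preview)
Your proposal is correct and follows essentially the same approach as the paper: the paper's one-line justification also observes that $\vec{b}_{\mathsf{init}}[j]=\vec{L}[\emptyset,j]$, invokes \Pone{} of Theorem~\ref{thm:main-sr} for the time bound, and appeals to $\vec{L}\cdot\vec{R}=\vec{D}_{n,k}$ for representation. Your write-up is simply a more detailed unpacking of the same argument.
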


It follows from \Pone{} of Theorem~\ref{thm:main-sr}, which implies that indeed  $\mathbf{b}_{\mathsf{init}}[r]=\mathbf{L}[\emptyset,r]$ can be computed in the claimed time, and $\vec{b}_{\mathsf{init}}$ represents $\vec{a}_{\mathsf{init}}$ because $\vec{L}\cdot\vec{R}=\vec{D}_{n,k}$.

\subsection{Application to \textsc{Weighted Directed} $k$-\textsc{Path}.}\label{subsec:kpathformal}
We now show how Theorem~\ref{thm:main-sr} implies Theorem~\ref{thm:kpath}. This is quite standard and a small adaptation of e.g.~\cite[Chapter 12]{DBLP:books/sp/CyganFKLMPPS15} and also introduced already in Subsection~\ref{subsec:kpath}. It will also be generalized in the next subsection (since Theorem~\ref{thm:mul} implies Theorem~\ref{thm:kpath} by a known reduction), but it is instructive to see a separate proof of Theorem~\ref{thm:kpath}.

Given is an instance $G=([n],E)$, $\{w_{i,j}\}_{(i,j) \in E}$ and $k$ of \textsc{Weighted Directed $k$-Path}.
We will apply Theorem~\ref{thm:main-sr} with $\cS$ being the $W$-capped min-plus semiring where $W:=k\cdot\max_{(i,j) \in E}w_{i,j}$, but stick to the general semiring notation for convenience.
Recall that we defined in Section~\ref{subsec:kpath}, for each $t \in [n]$ and $p \leq k$, a vector $\vec{a}_{t,p} \in \bS^{2^{[n]}}$ as follows (where we set $w_{i,j}:=\bar{0}=\infty$ if $(i,j)\notin E$):
\begin{align}
\vec{a}_{t,p} &:= \sum_{\substack{v_1,\ldots,v_p \in [n] \\ v_p=t }} \left( \prod_{i=1}^{p-1} w_{v_i,v_{i+1}}\cdot\vec{C}_{n,v_i}\right)\vec{C}_{n,t}.\label{eq:path}
\intertext{By the definition of $\vec{C}_{n,e}$, we have that for each $A$ of size $p$:}
\vec{a}_{t,p}[A]&=\sum_{\substack{v_1,\ldots,v_{p} \in [n]\\ v_p=t\\ \{v_1,\ldots,v_{p}\} = A }} \prod_{i=1}^{p-1} w_{v_i,v_{i+1}}.\nonumber
\end{align}
Hence, the answer to the instance of \textsc{Weighted Directed $k$-Path} is $\sum_{A \subseteq \binom{[n]}{\leq k}}\vec{a}[A]=(\vec{a}_{t,k}\cdot\vec{D}_{n,k})[\emptyset]$. Letting $\vec{a}_{\mathsf{init}}$ denote the vector with $\vec{a}_{\mathsf{init}}[A]=\llbracket A=\emptyset\rrbracket$ for all $A \subseteq [n]$, we have that $\vec{a}_{t,p}$ satisfies
\begin{align*}
	\setcounter{equation}{\theequation+1}
	\hspace{8em}\vec{a}_{t,p} &=
	\begin{cases}
		\mathrlap{\vec{a}_{\mathsf{init}}\cdot \vec{C}_{n,t}}\hphantom{\displaystyle \sum_{u \in N^-(t)} w_{u,t}\cdot \mathsf{convolve}(\vec{b}_{u,p-1},t)}& \text{if } p=1,\hspace{9.7em}\hypertarget{Z2}{\textrm{(\theequation a)}} \\
		\displaystyle \sum_{u \in N^-(t)} w_{u,t}\cdot \vec{a}_{u,p-1} \cdot \vec{C}_{n,t} & \text{otherwise.}\hfill\hypertarget{Z3}{\textrm{(\theequation b)}}
	\end{cases}
	\intertext{To see~\mylink{Z2} note from~\eqref{eq:path} that $\vec{a}_{t,1}[A]$ is equal to $\bar{1}$ if $A=\{t\}$ and it is equal to $\bar{0}$ otherwise. To see~\mylink{Z3}, note that in in both~\mylink{Z3} and~\eqref{eq:path} we sum over all options $u$ for $v_{p-1}$ and for a given choice multiply with $w_{u,t}=w_{v_{p-1},v_p}$ and $\vec{C}_{n,t}$. The algorithm implementing Theorem~\ref{thm:kpath} computes vectors $\vec{b}_{t,p}$ for $t \in V(G)$ and $p=1,\ldots,k'$ as follows, where $\vec{b}_{\mathsf{init}}$ is the vector from Observation~\ref{obs:init}:
	}
	\vec{b}_{t,p} &:=
	\begin{cases}\setcounter{equation}{\theequation+1}
		\mathsf{convolve}(\vec{b}_{\mathsf{init}}, t) & \text{if } p=1, \\
		\displaystyle \sum_{u \in N^-(t)} w_{u,t}\cdot \mathsf{convolve}(\vec{b}_{u,p-1},t) & \text{otherwise.}
	\end{cases}\setcounter{equation}{\theequation-1}
\end{align*}
By Theorem~\ref{thm:main-sr} and Observation~\ref{obs:init}, we can compute all $\vec{b}_{t,p}$ using a total of $2^{k+O(\sqrt{k}\cdot \log^2k)}n \log n$ time and arithmetic operations in $\bS$. Note that arithmetic operations in the $M$-capped min-plus semiring take $O(1)$ time in the word RAM model with word size $\log M$ (since $\mathsf{lcu}$ reduces to $\max$ here).

Now we argue by induction on $p=1,\ldots,k+1$ that $\vec{b}_{t,p}$ represents $\vec{a}_{t,p}$ for every $t$ and $p$: For $p=1$, this follows from Observation~\ref{obs:init} and \Ptwo{} of Theorem~\ref{thm:main-sr} and Lemma~\ref{lem:repcon}. Moreover, the case $p>1$ follows from applying Observation~\ref{obs:lin} for the summation over $u \in N^{-}(t)$, Observation~\ref{obs:scal} for the scalar multiplication with $w_{u,t}$ and \Ptwo{} of Theorem~\ref{thm:main-sr} and Lemma~\ref{lem:repcon} for the element convolution with $t$ in combination with the induction hypothesis.

Hence, $\vec{b}_{t,k}$ represents $\vec{a}_{t,k}$ and by Observation~\ref{obs:query} we have that $(\vec{a}_{t,k}\cdot\vec{D}_{n,k})[\emptyset]=(\vec{b}_{t,k}\cdot\vec{R})[\emptyset]$.
By \Pone{} of Theorem~\ref{thm:main-sr}, the minimum length of a $k$-path $(\sum_{t\in V}\vec{b}_{t,k}\cdot\vec{R})[\emptyset]$ can be computed in $2^{k+O(\sqrt{k}\cdot \log^2k)}n\log n$ time. This concludes the proof of Theorem~\ref{thm:kpath}.

\subsection{Application to \textsc{\textsc{Skewed Multilinear Monomial Summation over $\bS$}}.}\label{subsec:mul}
Let $\bS=(S,+,*,\bar{0},\bar{1})$ be an additively idempotent semiring and let $n,k$ be integers with $ k\leq n$.
\begin{definition}[A $d$-skewed arithmetic circuit $C$ over $\bS$]\label{def:skew}
A \emph{$d$-skewed arithmetic circuit $C$ over $\bS$} is a tuple $(D,\lambda,\pi_l,\pi_r)$, where $D=(V,E)$ is a directed acyclic graph and a labeling function $\lambda: V \rightarrow \{x_1,\ldots,x_n,*,+\} \cup S$ assigns indeterminates $x_1,\ldots,x_n$ or elements of $\bS$ to vertices of $D$ such that:
\begin{enumerate}
	\item If $g \in V$ is a source of $G$, $\lambda(g) \in \{x_1,\ldots,x_n\} \cup S$, and otherwise $\lambda(g) \in \{+,*\}$. 
	\item If $\lambda(g)=*$, then $g$ has exactly two in-neighbors, denoted $\pi_l(g)$ and $\pi_r(g)$.
	\item We associate with every vertex $g$ of $C$ a polynomial $P_g(x_1,\ldots,x_n)$ as follows:
	\begin{itemize}
		\item If $g$ is a leaf, $P_g=\lambda(g)$,
		\item If $g$ is a multiplication gate, then $P_g := P_{\pi_l(g)}\cdot P_{\pi_r(g)}$,
		\item If $g$ is an addition gate, then $P_g:=\sum_{u: (u,g) \in E} P_u$.
	\end{itemize}
	\item For each $g$ such that $\lambda(g)=*$, at least one of $P_{\pi_l(g)}$ and $P_{\pi_r(g)}$ contains at most $d$ monomials.\footnote{We note there exist different definitions of $d$-skewed circuits. We use the one from~\cite{DBLP:conf/soda/LokshtanovSZ21}.}
	\item $C$ has a unique sink $s$.
\end{enumerate}
Here we assume that the indeterminates $x_1,\ldots,x_n$ commute with each other and with elements of $S$, that is $x_i\cdot x_j=x_j\cdot x_i$ and $s\cdot x_i=x_i\cdot s$ for all $i\neq j$ and $s\in S$. 
\end{definition}

We let $|C|$ denote the number of gates of $C$ (e.g., $|V|$). Because of the commutativity assumptions of Definition~\ref{def:skew}, any polynomial $P_g$ can be written as $\sum_{A \subseteq [n]}c_A\prod_{i\in A}x_i$, for some coefficients $c_A \in S$. In partilcular, we can uniquely address any multilinear monomial (e.g. a monomial in which each variable as exponent at most $1$) of $P_g$ with the set of variables occuring in it.

The \textsc{Skewed Multilinear Monomial Detection over $\bS$} problem is formally defined as follows: Given is a $d$-skewed circuit $C$ with unique sink $s$ that computes a polynomial $P_s=\sum_{A \subseteq [n]} c_{A}\prod_{i \in A}x_i$ and an integer $k$. The task is to compute $\sum_{A \in \binom{[n]}{k}}c_A$.

We use the following preprocessing step (that can be proved via simple dynamic programming) to simplify the multiplication gates by using the $d$-skewedness property:

\begin{lemma}[Proposition 4.1 in the full version\footnote{Which currently only seems accessible at~\href{https://sites.cs.ucsb.edu/~daniello/papers/countingPathsSODA2021.pdf}{https://sites.cs.ucsb.edu/~daniello/papers/countingPathsSODA2021.pdf}.} of~\cite{DBLP:conf/soda/LokshtanovSZ21}]
	On $d$-skewed circuits for any $d \in \mathbb{N}$, the following can be computed in time $O(|C|\cdot d)$ time: For every gate $g$ in $C$, if $P_g$ has at most $d$ monomials, an explicit representation $Q_g$ of the polynomial $P_g$ (as a list of all the at most $d$ monomials along with their coefficients; we will address $Q_g$ as the set of pairs consisting of a monomial along with its non-zero coefficients in $P_g$), and otherwise $Q_g := null$.
\end{lemma}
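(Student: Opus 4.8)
The plan is to process the gates of $C$ once, in topological order, maintaining for every gate $g$ a value $Q_g$ that is either the full list of (monomial, nonzero-coefficient) pairs of $P_g$, in case $P_g$ has at most $d$ monomials, or the token $null$ otherwise; here a monomial is encoded by its exponent vector (a subset of $[n]$ in the multilinear case), which fits in $O(1)$ machine words. Fix any topological ordering of the DAG $D$. Since $\pi_l(g)$, $\pi_r(g)$, and all in-neighbours of an addition gate precede $g$, all values needed to handle $g$ have already been computed when $g$ is reached. For a leaf, $P_g$ is a single monomial -- the empty monomial with coefficient $\lambda(g)$ if $\lambda(g)\in S$ (and the empty list if $\lambda(g)=\bar 0$), or $x_i$ with coefficient $\bar 1$ if $\lambda(g)=x_i$ -- so $Q_g$ is read off directly.

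The workhorse fact is that an additively idempotent semiring admits \emph{no additive cancellation}: if $a+b=\bar 0$ then $a = a+(a+b) = (a+a)+b = a+b = \bar 0$, and symmetrically $b=\bar 0$; iterating, a finite sum of semiring elements is $\bar 0$ only when every summand is. Hence $\supp(P_g) = \bigcup_{u : (u,g)\in E}\supp(P_u)$ at an addition gate $g$, where $\supp$ denotes the set of monomials with nonzero coefficient. Thus, if some in-neighbour $u$ has $Q_u=null$ we may immediately set $Q_g:=null$, since $|\supp(P_g)|\ge|\supp(P_u)|>d$; otherwise we merge the lists $Q_u$ into a hash table keyed by the monomial, adding up coefficients of coinciding monomials, and set $Q_g$ to the merged list if it has at most $d$ entries and to $null$ otherwise.

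For a multiplication gate $g$ with operands $u=\pi_l(g)$ and $v=\pi_r(g)$ we use the $d$-skewedness of $C$: after possibly swapping, $|\supp(P_u)|\le d$, so $Q_u$ is an explicit list. If $Q_u$ is empty then $P_u=\bar 0$, hence $P_g=\bar 0$ and $Q_g$ is empty. If $Q_u$ is nonempty but $Q_v=null$, then -- since multiplying by a fixed monomial is an injection on the (cancellative) monoid of monomials and sums do not cancel -- $\supp(P_g)$ contains a shifted copy $m\cdot\supp(P_v)$ for any $m\in\supp(P_u)$, so $|\supp(P_g)|\ge|\supp(P_v)|>d$ and we put $Q_g:=null$; notice that this is decided without ever looking at the large, unavailable polynomial $P_v$, which is exactly the point of the skewedness hypothesis. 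Finally, if both $Q_u$ and $Q_v$ are nonempty lists, each of size at most $d$, we expand $P_u\cdot P_v$ by running over the pairs in $\supp(P_u)\times\supp(P_v)$, accumulating each $c_{m_1}c_{m_2}$ into the coefficient of $m_1 m_2$ in a hash table, setting $Q_g:=null$ as soon as more than $d$ distinct monomials appear, and otherwise returning the table.

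Correctness is immediate from the case analysis together with the no-cancellation property: in each case $Q_g$ is the full monomial list of $P_g$ precisely when that list has size at most $d$, and is $null$ otherwise. For the running time, the scan visits each gate once, with $O(1)$ work at a leaf and in the first two multiplication cases, $O(d)$ work per incoming wire at an addition gate, and $\poly(d)$ work in the remaining multiplication case (tightened to $O(d)$ by the careful bookkeeping that exploits $d$-skewedness), for a total of $O(|C|\cdot d)$ on the word RAM. The main obstacle is precisely the multiplication step: one must conclude $Q_g=null$ from $Q_v=null$ using only the structural no-cancellation property of additively idempotent semirings -- over a semiring with genuine cancellation this inference, and hence the lemma, would be false -- and, separately, one must organise the hashing and early abort so that combining like monomials at multiplication gates does not blow the running time past linear in $d$.
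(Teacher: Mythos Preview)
The paper does not give its own proof of this lemma; it is quoted verbatim from~\cite{DBLP:conf/soda/LokshtanovSZ21} and used as a black box. Your dynamic-programming sweep in topological order is exactly the natural argument and is presumably what the cited proof does, so there is nothing to compare against here beyond noting that your plan is the expected one.

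That said, there is one genuine soft spot in your multiplication-gate analysis. When $Q_u$ is a nonempty list and $Q_v=null$, you conclude $|\supp(P_g)|\ge|\supp(P_v)|>d$ by arguing that $\supp(P_g)\supseteq m\cdot\supp(P_v)$ for any $m\in\supp(P_u)$. This needs two ingredients: that sums do not cancel (which you correctly derive from additive idempotency), \emph{and} that products of nonzero semiring elements are nonzero. Additive idempotency alone does not rule out zero divisors: the semiring $(2^X,\cup,\cap,\emptyset,X)$ is additively idempotent yet has $A\cap B=\emptyset$ for disjoint nonempty $A,B$. In such a semiring the coefficient $c^u_m\cdot c^v_{m'}$ can vanish even when both factors are nonzero, and then your containment claim fails. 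For the Boolean and min-plus semirings that the paper actually uses there are no zero divisors, so your argument goes through there; but as written the step is not justified for an arbitrary additively idempotent $\bS$, and you should either add a no-zero-divisors hypothesis or explain why the downstream application tolerates $Q_g=null$ in the rare cases where $P_g$ happens to collapse.

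Two smaller bookkeeping points: at an addition gate the work is $O(d)$ \emph{per incoming wire}, which totals $O(|E|\cdot d)$; since the paper defines $|C|=|V|$, you should say a word about why this is within the stated bound (or assume bounded fan-in). And your parenthetical ``tightened to $O(d)$'' at multiplication gates is doing real work that you have not spelled out---naively expanding two lists of size $\le d$ is $O(d^2)$, and early abort does not obviously save you when many pairs collide to the same $\le d$ monomials.
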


We now prove Theorem~\ref{thm:mul} using Theorem~\ref{thm:main-sr}.
For every gate $g$, with associated polynomial $P_g(x_1,\ldots,x_n)=\sum_{A \subseteq [n]} c_A \prod_{i\in A}x_i$, define
\[
	\vec{a}_{g,p}[A] :=
	\begin{cases}
		c_A & \text{if }|A|=p,\\
		\bar{0} &\text{otherwise}.
	\end{cases}	 
\]
We use Theorem~\ref{thm:main-sr} to compute for every integer $p=0,\ldots,k$ and every gate $g$ of $C$ (in a topological order of the underlying graph $D$) a vector $\vec{b}_{g,p}$ as follows:
If $P_g$ has at most $d$ monomials, use $Q_g$ to construct
\[
	\vec{b}_{g,p} := \sum_{(\{x_{e_1},\ldots, x_{e_p}\}, \lambda) \in Q_g} \lambda \cdot \mathsf{convolve}(\vec{b}_{\mathsf{init}},\{e_1,\cdots, e_p\}),
\]
where we let $\mathsf{convolve}(\vec{b},\{e_1,\ldots,e_p\})$ denote the vector obtained by applying $\mathsf{convolve}$ on $\vec{b}$ iteratively $p$ times with as second argument $e_1,\ldots,e_p$.
By Lemma~~\ref{lem:repcon} and Observations,~\ref{obs:lin},~\ref{obs:scal}, and~\ref{obs:init} we have that $\vec{b}_{g,p}$ represents $\vec{a}_{g,p}$.

Otherwise, if we are given $\vec{b}_{g_i,p}$ for all in-neighbors $g_1,\ldots,g_\ell$ of a gate $g$ and $0\leq p \leq k$:
\begin{itemize} 
	\item If $g$ is an addition gate, set $\displaystyle \vec{b}_{g,p} := \sum_{i \in [\ell]} \vec{b}_{g_i,p}$.
	\item If $g$ is a multiplication gate and $P_{g_1}$ has at most $d$ monomials. Then $\ell=2$. Compute $\vec{b}_{g,p}$ as
	\[
		\vec{b}_{g,p} = \sum_{z \in \{0,\ldots,p\}} \sum_{(\{x_{e_1},\ldots, x_{e_z}\}, \lambda) \in Q_{g_1}} \lambda\cdot   \mathsf{convolve}(\vec{b}_{g_2,p-z}, \{e_1,\ldots,e_z\}).
	\]
	\item Otherwise, by $d$-skewedness, $g$ is a multiplication gate and $P_{g_2}$ has at most $d$ monomials. Then compute $\vec{b}_{g,p}$ as
	\[
	\vec{b}_{g,p} = \sum_{z \in \{0,\ldots,p\}} \sum_{(\{x_{e_1},\ldots, x_{e_z}\}, \lambda) \in Q_{g_2}} \mathsf{convolve}(\vec{b}_{g_1,p-z}, \{e_1,\ldots,e_z\})\cdot\lambda.
	\]
\end{itemize}
It follows from \Ptwo{} of Theorem~\ref{thm:main-sr}, Lemma~\ref{lem:per} and Observations~\ref{obs:lin},~\ref{obs:scal} and~\ref{obs:init}, that $\vec{b}_{g,p}$ represents $\vec{a}_{g,p}$, for every $g,p$. Hence, by Observation~\ref{obs:query}, the solution to the instance of \textsc{Skewed Multilinear Monomial Summation over $\bS$} can be found as $(\vec{b}_{g,p}\cdot\vec{R})[\emptyset]$, where $s$ is the only sink of $C$ (e.g., the `output gate').
By combining \Pone{} of Theorem~\ref{thm:main-sr} with straightforward vector-matrix multiplication, this can be computed in $2^{k+O(\sqrt{k}\log^2k)}(d+n+|C|)\log n$ time, and hence Theorem~\ref{thm:mul} follows.

\section{Dynamic Representative Sets: Implementation}\label{sec:imp}
This section is devoted to the proof of Theorem~\ref{thm:main-sr}.
Throughout this section we shorthand $\preceq_\bS$ to $\preceq$. In the first three steps we assume $k$ is a square, and the fourth (less insightful but, due to the generality of theorem, non-trivial) step shows this assumption is without loss of generality.
	
\medskip\noindent\textbf{The preprocessing algorithm.}
Assuming $k$ is a square, we use the following parameters:
\[
s:=\sqrt{k} \in \mathbb{Z},\quad \text{ and }\quad  u:=k^2.
\]
Construct an $(n,k,u)$-splitter $\cP$ of size $k^{O(1)}\log n$ in time $k^{O(1)} n \log n$ with Lemma~\ref{lem:per}.
Construct a $(u,k,s)$-splitter $\cS$ of size  $k^{O(s)}$ in time $k^{O(s)}u^{O(1)}=k^{O(\sqrt{k})}$ with Lemma~\ref{lem:split}.
Construct a $(u,s)$-universal family $\cF$ of size $2^{s}s^{O(\log s)}\log u$ in time $2^{s}s^{O(\log s)}\log u=2^{O(\sqrt{k})}$ with Lemma~\ref{lem:unifam}.
By Definition~\ref{def:con}, we can compute $\pi(e)$ in $O(1)$ time for every $\pi \in \cP$ and $e \in [n]$.

\medskip\noindent\textbf{The matrices $\vec{L}$ and $\vec{R}$.}
Let $\cH = \cP \times \cS$, $h:=|\cH|$ and define $\vec{H}$ to be the following matrix whose rows are indexed by subsets of $[n]$ and whose columns are indexed by $\cH \times (2^{[u]})^{s}$:
\[
\vec{H}[A,((\pi,\sigma),A_1,\ldots,A_s)] := \left\llbracket \text{$\pi$ is injective on $A$ and for all }  i \in [s]: \sigma^{-1}(i)\cap \pi(A)=A_i \right\rrbracket.
\]
By expanding the matrix product, we obtain that $\left(\vec{H}\cdot\left(\vec{I}_{h} \otimes \vec{D}_{u,s}^{\otimes s}\right)\cdot\vec{H}^\intercal\right)[A,B]$ equals
\begin{equation}\label{eq:PDP}
	\begin{aligned}
	&\ \sum_{\substack{(\pi,\sigma)\in \cH\\ A_1,\ldots,A_s\subseteq [u] \\  B_1,\ldots,B_s \subseteq [u]}} \vec{H}[A,((\pi,\sigma),A_1,\ldots,A_s)]\cdot
	\left( \prod_{i=1}^s \vec{D}_{u,s}[A_i,B_i] \right)\cdot
	 \vec{H}^{\intercal}[((\pi,\sigma),B_1,\ldots,B_s),B] \\
	&= \sum_{(\pi,\sigma) \in \cH} \llbracket \text{$\pi$ is injective on $A$}  \rrbracket \cdot \llbracket \text{$\pi$ is injective on $B$} \rrbracket\cdot \prod_{i=1}^s \vec{D}_{u,s}[\sigma^{-1}(i) \cap \pi(A), \sigma^{-1}(i) \cap \pi(B)].
	\end{aligned}
\end{equation}
Observe that~\eqref{eq:PDP} equals $\vec{D}_{n,k}[A,B]$: If $|A|+|B| > k$ or $A \cap B \neq \emptyset$, by injectivity of $\pi$ we have that one of the entries of $\vec{D}_{u,s}$ is equal to $0$. Otherwise, since $\pi$ is an $(n,k,u)$-splitter and $\sigma$ is a $(u,k,s)$-splitter there exist $\pi$ such that $\pi$ is injective on $A \cup B$ and $\sigma$ such that for every $i=1,\ldots,s$ it holds that $|\sigma^{-1}(i) \cap (A \cup B)|\leq s$ and hence the corresponding term in~\eqref{eq:PDP} is a product of $\bar{1}$'s.

Our next step is to factorize $\vec{D}_{u,s}$. For every $A,B \subseteq [u]$, $F \in \cF$ and $p\in \{0,\ldots,s\}$ define
\[
	\vec{X}[A,(F,p)] :=	\llbracket A \subseteq F \wedge |A|\leq p\rrbracket \quad \text{and} \quad \vec{Y}[(F,p),B] := \llbracket F \cap B = \emptyset \wedge |B|\leq  s-p \rrbracket.
\]
We claim that $\vec{D}_{u,s}=\vec{X}\cdot\vec{Y}$: If $|A|+|B| > s$, there does not exist $p$ such that $|A| \leq p$ and $|B| \leq s-p$. If $A \cap B \neq \emptyset$, there does not exist $F$ such that $A\subseteq F \cap B=\emptyset$.
Otherwise, $A \cap B=\emptyset$ and $|A|+|B|\leq s$. Then, since $\cF$ is a $(u,s)$-universal family, there exists at least one $F \in \cF$ such that $A\subseteq F \cap B = \emptyset$ and due to idempotency of $\bS$ we have that $(\vec{X}\cdot\vec{Y})[A,B]=\bar{1}$.
Hence, if we define
\[
\vec{L} := \vec{H}\cdot\left(\vec{I}_{h} \otimes \vec{X}^{\otimes s}\right) \quad \text{and} \quad
\vec{R} := \left(\vec{I}_{h} \otimes \vec{Y}^{\otimes s}\right)\cdot\vec{H}^\intercal,
\]
then $\vec{D}_{n,k}=\vec{L}\cdot\vec{R}$. Note that $\vec{X}\in \{0,1\}^{2^{[u]}\times \ell}$ and $\vec{L} \in \{0,1\}^{2^{[n]} \times r}$, where $\ell=|\cF|\cdot(s+1)$ and
\[
\begin{aligned}
	r= h\cdot \ell^{s} &= h\cdot(|\cF|\cdot(s+1))^s\\
	&=|\cP|\cdot|\cS|\cdot\left(2^{s}k^{O(\log k)}\log k\cdot(s+1)\right)^s \\
	&= (k^{O(1)}\log n)k^{O(\sqrt{k})}k^{O(1)}2^kk^{O(\sqrt{k} \log k)}=2^{k+O(\sqrt{k}\log^2 k)}\log n.
\end{aligned}
\]
To see that \Pone{} holds, note that we can assume that the subset of $[n]$ is of cardinality at most $k$ since otherwise the asked entry equals $\bar{0}$. Then per Definition~\ref{def:con}, we can compute any entry of $\vec{L}$ in $k^{O(1)}$ time by directly evaluating the following expression:
\[
\begin{aligned}
	\vec{L}[A,((\pi,\sigma),(F_1,p_1),\ldots,(F_s,p_s))] =
	\begin{cases}
		\displaystyle\prod_{i=1}^s \vec{X}[\pi(A) \cap \sigma^{-1}(i),(F_i,p_i)] & \text{if $\pi$ is injective on $A$},\\
		0 & \text{otherwise},
	\end{cases}\\
	\vec{R}[((\pi,\sigma),(F_1,p_1),\ldots,(F_s,p_s)),B] =
	\begin{cases}
		\displaystyle\prod_{i=1}^s \vec{Y}[(F_i,p_i),\pi(B) \cap \sigma^{-1}(i)] & \text{if $\pi$ is injective on $B$},\\
		0 & \text{otherwise}.
	\end{cases}\\
\end{aligned}
\]

\noindent\textbf{Element convolution.}
We now prove \Ptwo{} of Theorem~\ref{thm:main-sr}. The algorithm~$\mathsf{convolve}$ is described in Algorithm~\ref{alg:elcons}.
A crucial ingredient is the subroutine $\mathsf{invert}$ that takes a vector $\vec{b^*} \in \bS^\ell$ (indexed by $\cF\times\{0,\ldots,s\}$) as input and outputs the unique minimal vector $\vec{a^*} \in \bS^{2^{[u]}}$ such that $\vec{b}\preceq\vec{a^*}\cdot\vec{L}$.
Algorithm $\mathsf{convolve}$ iterates over all restrictions $\vec{b^*}$ of $\vec{b}$ induced by the values picked at the loops at Lines~\ref{linfors1},~\ref{linfors2} and~\ref{linfors3} and replaces this restriction $\vec{b^*}$ with $\mathsf{invert}(\vec{X},\vec{b^*})\cdot\vec{C}_{u,\pi(e)}\cdot\vec{X}$.

\begin{algorithm}
	\caption{Element convolution algorithm $\mathsf{convolve}$ implementing \protect\Ptwo{} of Theorem~\ref{thm:main-sr}.}
	\label{alg:elcons}
	\begin{algorithmic}[1]
		\REQUIRE  $\mathsf{convolve}(\vec{b},e)$ 
		\FOR{$(\pi,\sigma) \in \cH$}\label{linfors1}
		\STATE Let $i := \sigma(\pi(e))$
		\FOR{$F_1,\ldots,F_{i-1},F_{i+1},\ldots,F_s \in \cF$}\label{linfors2}
		\FOR{$p_1,\ldots,p_{i-1},p_{i+1},\ldots,p_s\in\{0,\ldots,s\}$}\label{linfors3}
		\FOR{$F_i \in \cF$ and $p_i \in \{0,\ldots,s\}$ }
		\STATE Let $\vec{b^*}[(F_i,p_i)]:= \vec{b}[(\pi,\sigma,(F_1,p_1),\ldots,(F_s,p_s))]$
		\ENDFOR
		\STATE Let $\vec{a^*}:= \mathsf{invert}(\vec{X},\vec{b^*})$\label{ln:inbvert}
		\hfill\algcomment{described in Lemma~\ref{lem:sisr}}
		\FOR{$F_i \in \cF$ and $p_i \in \{0,\ldots,s\}$ }
		\STATE Let $\vec{b'}[(\pi,\sigma,(F_1,p_1),\ldots,(F_s,p_s))] := (\vec{a^*}\cdot\vec{C}_{u,\pi(e)}\cdot\vec{X})[(F_i,p_i)]$\label{lin:setbp}
		\ENDFOR
		\ENDFOR
		\ENDFOR
		\ENDFOR
		\STATE $\mathbf{return}\ \vec{b'}$
	\end{algorithmic}		
\end{algorithm}

The following lemma (stated in a setting of general binary matrices for convenience, but we mostly keep the notation in which we apply it) shows that over additively idempotent semirings one can find the unique minimal solution to a system of inequalities in polynomial time. While this is a known result (it is implied by e.g.~\cite{math12182904}), we give a self-contained proof for completeness.
\begin{lemma}\label{lem:sisr} 
	There is an algorithm $\mathsf{invert}(\vec{X},\vec{b^*})$ that takes as input a matrix $\vec{X} \in \{\bar{0},\bar{1}\}^{m \times \ell}$ and a vector $\vec{b^*} \in \bS^{\ell}$ and it outputs after $\poly(m)$ time and $\poly(m)$ additions, multiplications and $\mathrm{lcu}$ operations on elements in $\bS$ a vector $\vec{a^*} \in \bS^{m}$ such that $\vec{b^*} \preceq \vec{a^*}\cdot\vec{X}$ and $\vec{a^*}$ is minimal in the sense that for every $\vec{\hat{a}}$ satisfying $\vec{b^*} \preceq \vec{\hat{a}}\cdot\vec{X}$ we have $\vec{a^*} \preceq \vec{\hat{a}}$.
\end{lemma}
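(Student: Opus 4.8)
The plan is to exhibit an explicit closed form for $\vec a^*$ and then verify the two required properties using only the lattice structure of $(S,\preceq)$ recalled in Section~\ref{sec:prelnot}. I will use the following elementary facts: $(S,\preceq)$ has top element $\bar 0$; $+$ is the meet (greatest lower bound) and $\mathrm{lcu}$ is the join (least upper bound); hence (i) every element of a finite set $T\subseteq S$ satisfies $t\preceq\mathrm{lcu}(T)$, and (ii) for a finite sum $\sum_{i}c_i$ we have $\sum_i c_i\preceq c_{i_0}$ for each index $i_0$. Finally, since $\vec X$ is $\{\bar 0,\bar 1\}$-valued and $a\cdot\bar 1=a$, $a\cdot\bar 0=\bar 0$, $a+\bar 0=a$, for every column $j$ the value $(\vec a\cdot\vec X)[j]=\sum_{i:\vec X[i,j]=\bar 1}\vec a[i]$ is exactly the meet of $\{\vec a[i]:\vec X[i,j]=\bar 1\}$ (an empty meet being $\bar 0$).

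\textbf{The algorithm.} I propose the following $\mathsf{invert}(\vec X,\vec b^*)$: for every row index $i\in[m]$, output
\[
\vec a^*[i] \;:=\; \mathrm{lcu}\bigl(\{\,\vec b^*[j] \;:\; j\in[\ell],\ \vec X[i,j]=\bar 1\,\}\bigr),
\]
where an empty $\mathrm{lcu}$ is read as $\sum_{y\in S}y$, the bottom element of $(S,\preceq)$ (this is the only place the all-zero rows of $\vec X$ are dealt with). Each entry is obtained by scanning one row of $\vec X$ and taking a single $\mathrm{lcu}$, so all of $\vec a^*$ is produced in $\poly(m)$ time using $\poly(m)$ additions, multiplications and $\mathrm{lcu}$ operations.

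\textbf{Correctness.} First I would check feasibility, i.e. $\vec b^*\preceq\vec a^*\cdot\vec X$: fix a column $j$; for every $i$ with $\vec X[i,j]=\bar 1$ the value $\vec b^*[j]$ appears among the arguments of the $\mathrm{lcu}$ defining $\vec a^*[i]$, so $\vec b^*[j]\preceq\vec a^*[i]$ by (i); thus $\vec b^*[j]$ is a common lower bound of $\{\vec a^*[i]:\vec X[i,j]=\bar 1\}$ and hence is $\preceq$ their meet, which is $(\vec a^*\cdot\vec X)[j]$. Then I would check minimality: let $\hat{\vec a}$ satisfy $\vec b^*\preceq\hat{\vec a}\cdot\vec X$; I must show $\vec a^*[i]\preceq\hat{\vec a}[i]$ for each $i$. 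Since $\vec a^*[i]$ is the \emph{least} upper bound of $\{\vec b^*[j]:\vec X[i,j]=\bar 1\}$, it suffices that $\hat{\vec a}[i]$ be an upper bound of this set; so fix $j$ with $\vec X[i,j]=\bar 1$: by feasibility of $\hat{\vec a}$ we get $\vec b^*[j]\preceq(\hat{\vec a}\cdot\vec X)[j]=\sum_{i':\vec X[i',j]=\bar 1}\hat{\vec a}[i']$, and since $i$ is one of these indices $i'$, fact (ii) gives $\sum_{i'}\hat{\vec a}[i']\preceq\hat{\vec a}[i]$, whence $\vec b^*[j]\preceq\hat{\vec a}[i]$. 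Uniqueness of the minimal vector then follows from antisymmetry of $\preceq$.

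\textbf{Expected difficulty.} This is a short and essentially forced argument; there is no serious obstacle. The only points requiring care are keeping straight the direction of $\preceq$ (its top is $\bar 0$, not its bottom), and consistently using that $+$ and $\mathrm{lcu}$ play dual roles — feasibility pushes $\vec b^*[j]$ below a meet of $\vec a^*$-entries, whereas minimality pulls $\vec a^*[i]$ below a join of $\vec b^*$-entries — so the two inequalities in the statement are witnessed by two genuinely different uses of the lattice axioms. The one small bookkeeping item is the empty-$\mathrm{lcu}$ convention (equivalently, assigning the bottom element $\sum_{y\in S}y$ to rows of $\vec X$ that are identically $\bar 0$).
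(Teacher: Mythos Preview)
Your proposal is correct and follows essentially the same approach as the paper: the explicit formula $\vec a^*[i]=\mathrm{lcu}(\{\vec b^*[j]:\vec X[i,j]=\bar 1\})$ is exactly the one the paper writes down (up to an index relabelling), and your feasibility argument is identical. Your minimality argument is a direct version of the paper's proof-by-contradiction (the paper considers $\hat{\vec a}+\vec a^*$ and derives the same bound $\vec b^*[j]\preceq\hat{\vec a}[i]$ you obtain), and you additionally make the empty-$\mathrm{lcu}$ convention explicit, which the paper leaves implicit.
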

\begin{proof}
	For every $j \in [m]$, we define $\vec{a}^*$ as follows:
	\[
	\vec{a}^{*}[j]:= \mathrm{lcu}(\{ \vec{b^*}[i]: \vec{X}[i,j]=\bar{1}\}).
	\]
	Then $\vec{b^*}[i] \preceq_\bS \vec{a^*}[j]$ for each $i,j$ satisying $\vec{X}[i,j]=\bar{1}$, and hence $\vec{b}^*[i] \preceq \sum_{j:\vec{X}[i,j]=1}\vec{a}^*[j]=(\vec{a^*}\cdot\vec{X})[i]$.

	Suppose for a contradiction that $\vec{b}^* \preceq \vec{\hat{a}}\cdot\vec{X}$, but $\vec{a}^*\not\preceq \vec{\hat{a}}$. Then $\vec{b}^*\preceq (\vec{\hat{a}}+\vec{a}^*)\cdot\vec{X}$ and for some $j \in [m]$ it holds that $\vec{\hat{a}}[j]+\vec{a}^*[j] \prec \vec{a}^*[j]$. Therefore,  $\vec{a}^*[j] \neq \mathrm{lcu}(\{ \vec{b^*}[i]: \vec{X}[i,j]=\hat{1}\})$ since $\vec{\hat{a}}[j]+\vec{a}^*[j]$ also is an upper bound of all elements in $\{ \vec{b^*}[i]: \vec{X}[i,j]=1\}$, contradicting the choice of $\vec{a^*}$.
\end{proof}
For notational convenience, we define the following matrix for $v \in [u]$, $i \in [s]$ and $e \in [n]$:
\[
	\vec{J}_{v,i,e}[(\pi,\sigma),(\pi',\sigma')] := \left\llbracket (\pi,\sigma)=(\pi',\sigma'),\pi(e)=v, \text{ and } \sigma(v)=i \right\rrbracket.	
\]
Since $\vec{J}_{v,i,e}$ is equal to $\hat{0}$ on all off-diagonal entries, we have that, for any $e \in [n]$,  $\sum_{v \in [u], i \in [s]}\vec{J}_{v,i,e}=\vec{I}_h$, and also that $\vec{J}_{v,i,e}\cdot \vec{J}_{v',i',e}$ equals the all-$\bar{0}$ matrix if $v\neq v'$ or $i\neq i'$, and $\vec{J}_{v,i,e}\cdot \vec{J}_{v,i,e}=\vec{J}_{v,i,e}$.

We use that $\vec{C}_{n,e}$ and $\vec{H}$ commute in the following sense:
\begin{claim}\label{clm:comm} We have that $\vec{C}_{n,e}\cdot\vec{H} = \vec{H}\cdot\vec{\hat{C}}_e$, where we define	$\vec{\hat{C}}_e := \left( \sum_{v,i} \vec{J}_{v,i,e} \otimes \vec{I}^{\otimes i-1}_{2^u}\otimes  \vec{C}_{u,v} \otimes \vec{I}^{\otimes s-i}_{2^u}\right)$.
\end{claim}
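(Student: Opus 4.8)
The plan is to verify the identity $\vec{C}_{n,e}\cdot\vec{H} = \vec{H}\cdot\vec{\hat{C}}_e$ by expanding both sides entrywise, indexing rows by subsets $A\subseteq[n]$ and columns by tuples $((\pi,\sigma),A_1,\ldots,A_s)$. On the left-hand side, by the definition of $\vec{C}_{n,e}$ we have $(\vec{C}_{n,e}\cdot\vec{H})[A,((\pi,\sigma),A_1,\ldots,A_s)]$ equals $\vec{H}[A\cup\{e\},((\pi,\sigma),A_1,\ldots,A_s)]$ if $e\notin A$, and $\bar{0}$ otherwise. So the whole task reduces to showing that the right-hand side $(\vec{H}\cdot\vec{\hat{C}}_e)[A,\cdot]$ matches this, i.e.\ it is nonzero (hence $\bar{1}$) exactly when $e\notin A$, $\pi$ is injective on $A\cup\{e\}$, and the $A_i$'s are the blocks of $\pi(A\cup\{e\})$ under $\sigma$.

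For the right-hand side, I would first use the fact that $\vec{J}_{v,i,e}$ is diagonal and kills the column unless $(\pi,\sigma)$ is the column's own permutation pair with $\pi(e)=v$ and $\sigma(v)=i$. So in the sum $\sum_{v,i}\vec{J}_{v,i,e}\otimes\vec{I}^{\otimes i-1}\otimes\vec{C}_{u,v}\otimes\vec{I}^{\otimes s-i}$, only the single term with $v=\pi(e)$ and $i=\sigma(\pi(e))$ survives when we look at a fixed output column indexed by $(\pi,\sigma)$. Then $(\vec{H}\cdot\vec{\hat{C}}_e)[A,((\pi,\sigma),A_1,\ldots,A_s)]$ becomes $\sum_{A_i'} \vec{H}[A,((\pi,\sigma),A_1,\ldots,A_{i-1},A_i',A_{i+1},\ldots,A_s)]\cdot\vec{C}_{u,\pi(e)}[A_i',A_i]$, where $i=\sigma(\pi(e))$. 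By the definition of $\vec{C}_{u,\pi(e)}$, the only surviving term has $\pi(e)\notin A_i'$ and $A_i'\cup\{\pi(e)\}=A_i$, so this equals $\vec{H}[A,((\pi,\sigma),A_1,\ldots,A_i\setminus\{\pi(e)\},\ldots,A_s)]$ provided $\pi(e)\in A_i$, and $\bar{0}$ otherwise.

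Now I would unpack when $\vec{H}[A,((\pi,\sigma),A_1,\ldots,A_i\setminus\{\pi(e)\},\ldots,A_s)]=\bar{1}$: this needs $\pi$ injective on $A$, and $\sigma^{-1}(j)\cap\pi(A)=A_j$ for $j\neq i$ while $\sigma^{-1}(i)\cap\pi(A)=A_i\setminus\{\pi(e)\}$. Combined with the requirement $\pi(e)\in A_i$ (and noting $A_i\subseteq\sigma^{-1}(i)$ forces $\sigma(\pi(e))=i$, which is automatic here), one checks this is equivalent to: $e\notin A$ (since $\pi(e)\notin\pi(A)$ by $\pi(e)\in A_i\setminus(\sigma^{-1}(i)\cap\pi(A))$ together with injectivity), $\pi$ injective on $A\cup\{e\}$, and $\sigma^{-1}(j)\cap\pi(A\cup\{e\})=A_j$ for all $j$. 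That is exactly $\vec{H}[A\cup\{e\},((\pi,\sigma),A_1,\ldots,A_s)]=\bar{1}$ with $e\notin A$, matching the left-hand side. The argument in the reverse direction (from the left-hand side condition back to the right) is symmetric. I expect the main obstacle to be bookkeeping: carefully tracking how injectivity of $\pi$ on $A$ versus on $A\cup\{e\}$ interacts with the condition $\pi(e)\in A_i$, and making sure the $\bar{0}$ cases on both sides coincide (e.g.\ when $e\in A$, or when $\pi$ is injective on $A$ but not on $A\cup\{e\}$, which happens iff $\pi(e)\in\pi(A)$, forcing $\pi(e)\notin A_i$ and hence a $\bar{0}$ on the right too). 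Since everything is $\{0,1\}$-valued and the semiring is idempotent, no cancellation issues arise, so once the case analysis is laid out the identity follows.
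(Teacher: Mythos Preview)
Your proposal is correct and follows essentially the same approach as the paper: both expand $(\vec{C}_{n,e}\cdot\vec{H})$ and $(\vec{H}\cdot\vec{\hat{C}}_e)$ entrywise, reduce the right-hand side to the single surviving term $v=\pi(e)$, $i=\sigma(\pi(e))$ via the diagonality of $\vec{J}_{v,i,e}$, and then verify that the resulting Boolean conditions on $(A,(\pi,\sigma),A_1,\ldots,A_s)$ coincide. The paper organizes the equivalence check by labeling the conditions and tracing the implications between them, while you phrase it as a direct comparison with an explicit handling of the $\bar{0}$ cases; the substance is the same.
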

\begin{proof}
	Note that the only non-zero entries of $\vec{\hat{C}}_e$ are
	\[
	\vec{\hat{C}}_e\Big[\Big((\pi,\sigma),A_1,\ldots,A_s\Big),\hspace{1em}\Big((\pi,\sigma),A_1,\ldots,A_{i-1},A_{i} \cup \pi(e),  A_{i+1},\ldots ,A_s\Big)\Big],
	\]
	whenever $\pi(e) \notin A_i$, where we shorthand $i=\sigma(\pi(e))$. We expand the matrix product and obtain
	\begin{align}\label{eq:condione}
	(\vec{C}_{n,e} \cdot \vec{H})[A,((\pi,\sigma),A_1,\ldots,A_s)]&=
	\left\llbracket
	\begin{aligned}\setcounter{equation}{\theequation+1}
		&\ \text{$e\notin A$, $\pi$ is injective on $A \cup e$, and}&\hypertarget{B1}{(\theequation a)}\\
		&\ \forall j \in [s] \setminus i: \sigma^{-1}(j)\cap \pi(A \cup e)=A_j \text{ and}&\hypertarget{B2}{(\theequation b)}\\
		&\  \sigma^{-1}(i)\cap \pi(A \cup e)=A_i&\hypertarget{B3}{(\theequation c)}
	\end{aligned}\setcounter{equation}{\theequation-1}
	\right\rrbracket.
	\intertext{Expanding the other matrix product, we also have that}\label{eq:conditwo}
	(\vec{H} \cdot \hat{\vec{C}}_e)[A,((\pi,\sigma),A_1,\ldots,A_s)]&=
	\left\llbracket
	\begin{aligned}\setcounter{equation}{\theequation+1}
		&\ \text{$\pi$ is injective on $A$, and}&\hypertarget{A1}{(\theequation a)}\\
		&\  \forall j \in [s] \setminus i: \sigma^{-1}(j)\cap \pi(A)=A_j \text{ and}&\hypertarget{A2}{(\theequation b)}\\
		&\ \pi(e) \in A_i, \sigma^{-1}(i)\cap \pi(A)=A_i \setminus \pi(e)\:&\hypertarget{A3}{(\theequation c)}
	\end{aligned}\setcounter{equation}{\theequation-1}
	\right\rrbracket.
	\end{align}
	Observe that \mylink{B2} is equivalent to \mylink{A2}, because $ \pi(e) \notin \sigma^{-1}(j)$ for $j\neq i$, since $i=\sigma(\pi(e))$.
	Note that \mylink{B1} implies \mylink{A1} and that \mylink{A3} implies \mylink{B3}.
	Moreover, \mylink{A3} implies that $\pi(e) \notin \pi(A)$ (and thus $e \notin A$) and hence $\mylink{A3}$ and $\mylink{A1}$ together imply $\mylink{B1}$.
	Finally, $\mylink{B3}$ and $\mylink{B1}$ together imply $\mylink{A3}$: $\pi(e) \in A_i$ since $\sigma^{-1}(i)\cap \pi(A \cup e)=A_i$ and $\sigma^{-1}(i)\cap \pi(A)=A_i \setminus \pi(e)$ since $\sigma^{-1}(i)\cap \pi(A \cup e)=A_i$ and $\pi$ is injective on $A \cup \{e\}$. Hence, the two conjunctions of conditions are equivalent and $\vec{C}_{n,e}\cdot\vec{H} = \vec{H}\cdot\vec{\hat{C}}_e$.
\end{proof}
Consider any iteration of the loops of Lines~\ref{linfors1},~\ref{linfors2} and~\ref{linfors3} corresponding to values $(\pi,\sigma)$, $F_1,\ldots,F_{i-1}$, $F_{i+1},\ldots,F_s$ and $p_1,\ldots,p_{i-1}$, $p_{i+1},\ldots,p_s$.
For every $A \subseteq [u]$ define
\[
	\vec{a}^*_{\mathsf{ext}}[(\pi,\sigma,(F_1,p_1),\ldots,(F_{i-1},p_{i-1}),A,(F_{i+1},p_{i+1}),\ldots,(F_s,p_s))]
\]
to be the value set to $\vec{a}^*[A]$ at Line~\ref{ln:inbvert} in this iteration.
Since every part of $\vec{b'}$ induced by $\pi,\sigma,F_1,\ldots,F_{i-1}$, $F_{i+1},\ldots,F_s$ and $p_1,\ldots,p_{i-1},p_{i+1},\ldots,p_s$ is defined as $\vec{a^*}\cdot\vec{C}_{u,v}\cdot\vec{X}$ we have that
\begin{equation}\label{eq:bprime}
\vec{b'} = \vec{a}^*_{\mathsf{ext}}\left(\sum_{v,i} \vec{J}_{v,i,e}\otimes \vec{I}_{\ell}^{\otimes i-1}\otimes\vec{C}_{u,v}\cdot \vec{X}\otimes \vec{I}_{\ell}^{\otimes s-i}\right).
\end{equation}
We now show that $\vec{b}\cdot\vec{R}\cdot \vec{C}^\intercal_{n,e}\preceq \vec{b'}\cdot\vec{R}$. By the properties of $\mathsf{invert}$ as stated in Lemma~\ref{lem:sisr}, we have that 
$\vec{b}^* \preceq \vec{a}^* \cdot \vec{X}$.
Hence we have the following global inequality:
\[
\vec{b} \preceq \vec{a}^*_{\mathsf{ext}}\left(\sum_{v,i} \vec{J}_{v,i,e}\otimes \vec{I}_{\ell}^{\otimes i-1}\otimes\vec{X}\otimes \vec{I}_{\ell}^{\otimes s-i}\right).
\]
Multiplying both sides with $\vec{R}\cdot \vec{C}^{\intercal}_{n,e}$ we see that $\vec{b}\cdot\vec{R}\cdot \vec{C}^\intercal_{n,e} \preceq$
\begin{align*}
	&\phantom{\preceq\ } \vec{a}^*_{\mathsf{ext}}\left(\sum_{v,i} \vec{J}_{v,i,e}\otimes \vec{I}_{\ell}^{\otimes i-1}\otimes\vec{X}\otimes \vec{I}_{\ell}^{\otimes s-i}\right)\vec{R}\cdot \vec{C}^{\intercal}_{n,e},\\
	\intertext{and using $\vec{H}^\intercal \cdot\vec{C}^\intercal_{n,e}=\vec{\hat{C}}^\intercal_e\cdot\vec{H}^\intercal$ (from Claim~\ref{clm:comm}) we can rewrite this into}
	&= \vec{a}^*_{\mathsf{ext}}\left(\sum_{v,i} \vec{J}_{v,i,e}\otimes \vec{I}_{\ell}^{\otimes i-1}\otimes\vec{X}\otimes \vec{I}_{\ell}^{\otimes s-i}\right)\cdot \left( \sum_{v,i} \vec{J}_{v,i,e} \otimes \vec{Y}^{\otimes s}\right)\vec{\hat{C}}^\intercal_e\cdot \vec{H}^\intercal\\
	\intertext{Using $\vec{\hat{C}}^\intercal_e=\sum_{v,i} \vec{J}_{v,i,e} \otimes \vec{I}^{\otimes i-1}_{2^u}\otimes  \vec{C}^\intercal_{u,v} \otimes \vec{I}^{\otimes s-i}_{2^u}$ and that all off-diagonal entries of $\vec{J}_{v,i,e}$ are $\bar{0}$ we can further rewrite into}
	&= \vec{a}^*_{\mathsf{ext}}\left(\sum_{v,i} \vec{J}_{v,i,e}\otimes \vec{I}_{\ell}^{\otimes i-1}\otimes\vec{D}_{u,s}\cdot\vec{C}^\intercal_{u,e}\otimes \vec{I}_{\ell}^{\otimes s-i}\right)\cdot \left( \sum_{v,i} \vec{J}_{v,i,e} \otimes \vec{Y}^{\otimes i-1}\otimes \vec{I}_{2^{u}} \otimes \vec{Y}^{\otimes s-i}\right) \vec{H}^\intercal\\
	\intertext{applying Lemma~\ref{lem:comm} we obtain that}
	&= \vec{a}^*_{\mathsf{ext}}\left(\sum_{v,i} \vec{J}_{v,i,e}\otimes \vec{I}_{\ell}^{\otimes i-1}\otimes\vec{C}_{u,e}\cdot\vec{X}\otimes \vec{I}_{\ell}^{\otimes s-i}\right)\cdot \left( \sum_{v,i} \vec{J}_{v,i,e} \otimes \vec{Y}^{\otimes s}\right) \vec{H}^\intercal.\\
	\intertext{and we can use~\eqref{eq:bprime} and that $\sum_{v,i}\vec{J}_{v,i,e}=\vec{I}_h$ to bound as}
	&\preceq \vec{b'}\left(\vec{I}_h \otimes \vec{Y}^{\otimes s}\right)\vec{H}^\intercal\\
	&=\vec{b'}\cdot\vec{R}.
\end{align*}
It remains to show that $\vec{b} \preceq \vec{a} \cdot \vec{L}$ implies $\vec{b'} \preceq \vec{a}\cdot \vec{C}_{n,e} \cdot \vec{L}$.
Define
\begin{align*}
	\vec{\hat{a}}_{\mathsf{ext}} &:= \vec{a} \cdot \vec{H} \left(\sum_{v,i} \vec{J}_{v,i,e}\otimes \vec{X}^{\otimes i-1}\otimes\vec{I}_{2^u}\otimes \vec{X}^{\otimes s-i}\right),\\
	\intertext{and also its more local version:}
	\vec{\hat{a}}[A] &:= \vec{\hat{a}}_{\mathsf{ext}}[(\pi,\sigma,(F_1,p_1),\ldots,(F_{i-1},p_{i-1}),A,(F_{i+1},p_{i+1}),\ldots,(F_s,p_s))].\\
	\intertext{Observe that}
	&(\vec{\hat{a}} \cdot\vec{X})[(F_i,p_i)]=(\vec{a}\cdot\vec{L})[(\pi,\sigma,(F_1,p_1),\ldots,(F_s,p_s))]
\end{align*}
Hence, by the assumption $\vec{b}\preceq \vec{a}\cdot\vec{L}$ we have $\vec{b}^* \preceq \vec{\hat{a}}\cdot\vec{X}$.
In each iteration we thus have by Lemma~\ref{lem:sisr} that 
$\vec{a}^* \preceq \vec{\hat{a}}$ and hence $\vec{a}^*_{\mathsf{ext}} \preceq \vec{\hat{a}}_{\mathsf{ext}}$.
Multiplying both sides of $\vec{a}^*_{\mathsf{ext}} \preceq \vec{\hat{a}}_{\mathsf{ext}}$ with $\sum_{v,i} \vec{J}_{v,i,e}\otimes \vec{I}_{\ell}^{\otimes i-1}\otimes\vec{C}_{u,v}\cdot \vec{X}\otimes \vec{I}_{\ell}^{\otimes s-i}$ we obtain by~\eqref{eq:bprime} that
\begin{align*}
	\vec{b'} &\preceq
	\vec{\hat{a}}_\mathsf{ext}\ \ \left(\sum_{v,i} \vec{J}_{v,i,e}\otimes \vec{I}_{\ell}^{\otimes i-1}\otimes\vec{C}_{u,v}\cdot \vec{X}\otimes \vec{I}_{\ell}^{\otimes s-i}\right)\\
	&= \vec{a} \cdot \vec{H} \left(\sum_{v,i} \vec{J}_{v,i,e}\otimes \vec{X}^{\otimes i-1}\otimes\vec{I}_{2^u}\otimes \vec{X}^{\otimes s-i}\right)\cdot s\left(\sum_{v,i} \vec{J}_{v,i,e}\otimes \vec{I}_{\ell}^{\otimes i-1}\otimes\vec{C}_{u,v}\cdot \vec{X}\otimes \vec{I}_{\ell}^{\otimes s-i}\right)\\
	\intertext{Using that all off-diagonal entries of $\vec{J}_{v,i,e}$ are $\bar{0}$, we can rewrite this into}
	&= \vec{a} \cdot \vec{H} \left(\sum_{v,i} \vec{J}_{v,i,e}\otimes \vec{I}_{2^u}^{\otimes i-1}\otimes\vec{C}_{u,v}\otimes \vec{I}_{2^u}^{\otimes s-i}\right)\cdot \left(\sum_{v,i} \vec{J}_{v,i,e}\otimes \vec{X}^{\otimes s}\right)\\
	&= \vec{a}\cdot\vec{H}\cdot\vec{\hat{C}}_e\cdot \left(\sum_{v,i} \vec{J}_{v,i,e}\otimes \vec{X}^{\otimes s}\right)\\
	\intertext{Using Claim~\ref{clm:comm} and that $\sum_{v,i}\vec{J}_{v,i,e}=\vec{I}_h$, this is equal to}
	&=  \vec{a} \cdot \vec{C}_{n,e}\cdot \vec{L}.
\end{align*}

\subsection*{Step 4: Justification of the assumption that $k$ is a square}
We now prove Theorem~\ref{thm:main-sr}, assuming it holds whenever $k$ is a square number.
Define $s:=\left\lceil\sqrt{k}\right\rceil$, $d:=s^2-k$, $n':=n+d$ and $k':=s^2$. Apply the theorem with $n'$ and $k'$ to obtain a factorization $\vec{D}_{n',k'}=\vec{L'}\cdot\vec{R'}$ of rank $r=2^{k'+O(\sqrt{k'}\cdot \log^2k')}\log n'=2^{k+O(\sqrt{k}\cdot \log^2k)}\log n$.
Define a matrix $\vec{P} \in \bS^{2^{n}\times 2^{n'}}$ as follows: For every $A \subseteq [n]$ and $A'\subseteq [n']$ we set
\[
\vec{P}[A,A'] = \llbracket A \cup \{n+1,\ldots,n+d\} = A' \rrbracket.
\]
Observe that $\vec{P}\cdot\vec{C}_{n,e}=\vec{C}_{n,e}\cdot\vec{P}$, if $e \in [n]$.
Define $\vec{L}$ and $\vec{R}$ as follows:
$\vec{L}:=\vec{P}\cdot\vec{L'}$ and $\vec{R}$ is equal to $\vec{R'}$ restricted to subsets of $\{1,\ldots,n\}$. Note that, for any $A,B\subseteq [n]$ we have that $(\vec{L}\cdot\vec{R})[A,B]=\vec{D'}_{n',k'}[A \cup \{n+1,\ldots,n+d\},B]$.
Hence, $\vec{L}\cdot\vec{R}=\vec{D}_{n,k}$.
Moreover, any entry of $\vec{L}$ and $\vec{R}$ can be computed in $k^{O(1)}$ time since this also holds for $\vec{L'}$ and $\vec{R'}$. Hence \Pone{} follows.

We now show \Ptwo. Let $\mathsf{convolve}$ be the algorithm that is given by Theorem~\ref{thm:main-sr} with $n',k'$.
Let $e \in [n]$ and define $\vec{b'}:=\mathsf{convolve}(\vec{b},e)$.
We have that $\vec{b}\cdot\vec{R'}\cdot\vec{C}^\intercal_{n',e} \preceq_\bS \vec{b'}\cdot\vec{R'}$ and hence $\vec{b}\cdot\vec{R}\cdot\vec{C}^\intercal_{n,e} \preceq_\bS \vec{b'}\cdot\vec{R}$.
Also, if $\vec{a} \in \bS^{2^{[n]}}$ such that $\vec{b} \preceq_\mathbb{S} \vec{a}\cdot\vec{L} = (\vec{a} \cdot\vec{P}) \cdot \vec{L'}$ then it holds that $\vec{b'} \preceq_\mathbb{S} (\vec{a}\cdot\vec{P})\cdot \vec{C}_{n,e} \cdot \vec{L'}=\vec{a}\cdot\vec{C}_{n,e}\cdot\vec{L}$, where the inequality follows from the mentioned observation $\vec{P}\cdot\vec{C}_{n,e}=\vec{C}_{n,e}\cdot\vec{P}$.

\paragraph{Acknowledgements.}
The author thanks Tomohiro Koana for extensive comments on this manuscript and helpful discussions and Carla Groenland for comments on an earlier version of this manuscript.

\small
\bibliographystyle{alpha}
\bibliography{refs}

\end{document}